\newcommand{\algmargin}{\the\ALG@thistlm}
\newlength{\whilewidth}
\algnewcommand{\parState}[1]{\State%
	\parbox[t]{\dimexpr\linewidth-\algmargin}{\strut #1\strut}}
\DeclareFontShape{T1}{calligra}{m}{n}{<->s*[2.2]callig15}{}
\DeclareMathAlphabet{\matcalligra}{T1}{calligra}{m}{n}
\DeclareMathAlphabet\mathbfcal{OMS}{cmsy}{b}{n}
\let\mathcal\undefined \DeclareMathAlphabet{\mathcal}{OMS}{cmsy}{m}{n}
\newcommand{\flqsr}{\ensuremath{\mathsf{4LQS^R}}}
\newcommand{\dlssx}{\mathcal{DL}\langle \mathsf{4LQS^{R,\!\times}}\rangle(\D)}
\newcommand{\shdlssx}{\mathcal{DL}_{\D}^{4,\!\times}}
\newcommand{\shdlss}{\mathcal{DL}_{\D}^{4}}
\newcommand{\D}{\mathbf{D}}
\newcommand{\sroiqd}{\mathcal{SROIQ}(\D)}
\newcommand{\defAs}{\coloneqq}
\newcommand{\I}{\mathbf{I}}
\newcommand{\Ind}{\mathbf{Ind}}
\newcommand{\C}{\mathbf{C}}
\newcommand{\Ra}{\mathbf{R_A}}
\newcommand{\Rd}{\mathbf{R_D}}
\newcommand{\sym}{\mathsf{Sym}}
\newcommand{\asym}{\mathsf{Asym}}
\newcommand{\refl}{\mathsf{Ref}}
\newcommand{\irref}{\mathsf{Irref}}
\newcommand{\tra}{\mathsf{Tra}}
\newcommand{\fun}{\mathsf{Fun}}
\newcommand{\vipcomment}[1]{}
\newcommand{\pow}{\mathcal{P}}
\newcommand{\var}{\mathtt{Var}}
\newcommand{\T}{\mathcal{T}}
\newcommand{\ke}{KE-tableau}
\newcommand{\M}{\mathbfcal{M}}
\newcommand{\KB}{\mathcal{KB}}
\newcommand{\vari}{\mathtt{Var}_i}
\newcommand{\varz}{\mathtt{Var}_0}
\newcommand{\varu}{\mathtt{Var}_1}
\newcommand{\vart}{\mathtt{Var}_3}
\newcommand{\vare}{\mathsf{V}_{\mathsf{e}}}
\newcommand{\vardt}{\mathsf{V}_{\mathsf{d}}}
\newcommand{\varar}{\mathsf{V}_{\mathsf{ar}}}
\newcommand{\varcon}{\mathsf{V}_{\mathsf{c}}}
\newcommand{\varind}{\mathsf{V}_{\mathsf{i}}}
\newcommand{\varcr}{\mathsf{V}_{\mathsf{cr}}}
\newcommand{\sfvar}[2]{ {\mathsf{#1}_{#2}} }
\newcommand{\DT}{\mathcal{D}}
\newcommand{\coreflqsr}{\mathsf{4LQS}_{\scriptscriptstyle{\mathcal{DL}_{ \scaleto{\mathbf{D}}{2.5pt}}^{\scaleto{4,\!\times}{3pt}}}}^R}
\newcommand{\keg}{\textnormal{KE}$^{\mathbf{\gamma}}$-tableau} %Edit
\newcommand{\kegs}{\textnormal{KE}$^{\mathbf{\gamma}}$} %Edit
\newcommand{\prochoplus}{\textit{HOCQA}^\gamma\textit{-}{\shdlssx}} %Edit
\newcommand{\procho}{\textit{HOCQA}\textit{-}{\shdlssx}} %Edit
\newcommand{\egamma}{\textnormal{E}^{\gamma}\textnormal{-rule}}
\newcommand{\seqs}{\mathcal{S}^{\overline{\beta}_i\tau}}
\newcommand{\seqsnj}{\mathcal{S}^{\overline{\beta}\tau}_j}
\newcommand{\consistency}{\textit{Consistency-}{\shdlssx}} 
\newcommand{\litqt}{Lit^{\vartheta}_q}
\newcommand{\minord}{<_{x_0}}
\title{	An optimized KE-tableau-based system for reasoning in the description logic $\shdlssx$ (Extended Version)} 
\author{Domenico Cantone \and Marianna Nicolosi-Asmundo \and \\Daniele Francesco Santamaria}
\institute{
	University of Catania, Dept. of Mathematics and Computer Science\\
	~email:~\texttt{\{cantone,nicolosi,santamaria\}@dmi.unict.it}
}
\begin{document}
	\maketitle
	
%\doublespacing
	
\begin{abstract}
	
	We present a \ke-based procedure for the main TBox and ABox reasoning tasks for the description logic $\dlssx$, in short $\shdlssx$. The logic $\shdlssx$, representable in the decidable multi-sorted quantified set-theoretic fragment $\flqsr$, combines the high scalability and efficiency of rule languages such as the Semantic Web Rule Language (SWRL) with the expressivity of description logics. %In fact it supports, among other features, Boolean operations on concepts and roles, role constructs such as the product of concepts and role chains on the left hand side of inclusion axioms, and role properties such as transitivity, symmetry, reflexivity, and irreflexivity. 

Our algorithm is based on a variant of the \ke\space system for sets of universally quantified clauses, where the KE-elimination rule is generalized in such a way as to incorporate the $\gamma$-rule. The novel system, called \keg, turns out to be an improvement of the system introduced in \cite{RR2017} and of standard first-order \ke x \cite{dagostino94}. Suitable benchmark test sets executed on C++ implementations of the three mentioned systems show that the performances of the \keg-based reasoner are often up to about 400\% better than the ones of the other two systems.
This a first step towards the construction of  efficient reasoners for expressive OWL ontologies based on fragments of computable set-theory.

\end{abstract}

%admitting, among other charateristics, full negation, product of concepts, Boolean operations on concrete roles, and properties on abstract roles such as transitivity, symmetry, reflexivity, and irreflexivity.

% Come in ICTCS 2016

\section{Introduction}
Recently, decidability results in Computable Set Theory have been used for knowledge representation and reasoning, in particular, in the context of description logics (DLs) and rule languages for the Semantic Web. Such efforts are motivated by the fact that there exists a natural translation function between set-theoretical fragments  and languages for the Semantic Web.

In particular, the decidable four-level stratified set-theoretic fragment $\flqsr$, involving variables of four sorts, pair terms, and a restricted form of quantification over variables of the first three sorts (cf.\ \cite{CanNic2013}) has been used in \cite{ictcs16} to represent the DL $\dlssx$, in short $\shdlssx$. 

The DL $\shdlssx$ admits Boolean operations on concepts, concept domain and range, existential quantification, and minimum cardinality on the left-hand side of inclusion axioms. It also supports role constructs such as role chains on the left hand side of inclusion axioms, Boolean operations on (abstract and concrete) roles, product of concepts, and properties on roles such as transitivity, symmetry, reflexivity, and irreflexivity. The DL $\shdlssx$  admits also data types, a simple form of concrete domains that are relevant in real world applications. In addition, it permits to express the Semantic Web Rule Language (SWRL), an extension of the Ontology Web Language (OWL). 
%
%and proved that  the consistency problem for $\shdlss$-knowledge bases is decidable by means of a reduction to the satisfiability problem for $\flqsr$, whose decidability has been established in \cite{CanNic2013}. It has also been shown that, under not very restrictive constraints, the consistency problem for $\shdlss$-knowledge bases is \textbf{NP}-complete. %Such a low complexity result is motivated by the fact that existential quantification cannot appear on the right-hand side of inclusion axioms. Nonetheless, $\shdlss$ turns out to be  more expressive than other low complexity logics such as OWL RL and suitable for representing real world ontologies.
%Then, in \cite{ictcs16}, the DL $\dlssx$ ($\shdlssx$, for short), extending $\shdlss$ with Boolean operations on concrete roles and with the product of concepts has been introduced and 
%
Decidability of the \emph{Conjunctive Query Answering} (CQA) problem for $\shdlssx$ has been proved in \cite{ictcs16} via a reduction to the CQA problem for $\flqsr$, whose decidability easily follows from that of $\flqsr$ (see \cite{CanNic2013}). In \cite{ictcs16}, the authors provided a terminating \ke\space based procedure that, given a $\shdlssx$-query $Q$ and a $\shdlssx$-knowledge base $\mathcal{KB}$ represented in set-theoretic terms, determines the answer set of $Q$ with respect to $\mathcal{KB}$. Notice that such an algorithm serves also as a decision procedure for the consistency problem for $\shdlssx$-knowledge bases (KBs). We recall that \ke\space systems \cite{dagostino1999} construct tableaux whose distinct branches define mutually exclusive situations, thus preventing the proliferation of redundant branches, typical of semantic tableaux. 

The results presented in \cite{ictcs16} have been extended in \cite{RR2017}  to the main ABox reasoning tasks for $\shdlssx$, such as instance checking and concept retrieval. by defining the Higher-Order Conjunctive Query Answering (HOCQA) problem for $\shdlssx$. %and study their decidability via a reduction to the satisfiability problem for  $\flqsr$. 
Such problem, instantiable to the principal reasoning tasks for $\shdlssx$-ABoxes, has been defined by introducing Higher Order (HO) $\shdlssx$-conjunctive queries, admitting variables of three sorts: individual and data type variables, concept variables, and role variables. Decidability of the HOCQA problem for $\shdlssx$ has been proved  via a reduction to the HOCQA problem for the set-theoretic fragment $\flqsr$. %Decidability of the latter problem follows from that of the satisfiability problem for $\flqsr$. 

In \cite{cilc17}, an implementation of the \ke\space procedure defined in \cite{RR2017} has been presented. Such prototype, written in C++, supports OWL 2  $\shdlssx$-KBs in the OWL/XML serialization. It was implemented only for TBox-reasoning services, namely, for verifying the consistency of given  ontologies. Purely universal quantifiers are eliminated by the reasoner during a preprocessing phase, in which each quantified formula is instantiated in a systematic way with the individuals of the KB. The resulting instances are then suitably handled by applying the KE-elimination and bivalence rules. In the light of the benchmarking of the prototype, it turned out that the preprocessing phase of the universally quantified formulae is more and more expensive as the size of the KB grows.

In this paper, the \ke-based procedure defined in \cite{RR2017} is modified, by eliminating the preprocessing phase for universally quantified formulae and replacing the standard KE-elimination rule with a novel elimination rule, called $\egamma$, incorporating the standard rule for treating universally quantified formulae ($\gamma$-rule). The resulting system turns out to be more efficient than the KE-system in \cite{cilc17} and the First-Order (FO) KE-system in \cite{dagostino94} as shown by suitable benchmarking tests executed on C++ implementations of the three systems. The main reason for such a speed-up relies on the fact that the novel $\egamma$ does not need to store the instances of universally quantified formulae on the \ke.

%In this paper, we treat purely universal formulae by removing the preprocessing phase and generalizing the elimination rule so to act directly on such type of formulae. Such improvement of the procedure has remarkable aftermaths, since its implementation is markedly more efficient in terms of space and execution time as outlined in our performance test. We also extended the reasoning capability of the prototype to the HOCQA problem for $\shdlssx$-KBs and thus to the main ABox reasoning tasks.

%roduced in  \cite{RR2017}, where the explicit saturation of the $\KB$ is not required and where the elimination rule is generalized to purely universal quantified formula. The introduction of the new procedure has remarkable aftermath, since its implementation is markedly more efficient than the one introduced in \cite{RR2017} and implemented in \cite{cilc17} in terms of space and execution time as outlined in our performance test.

\section{Preliminaries}
\subsection{The set-theoretic fragment} \label{4LQS}

It is convenient to recall the main set-theoretic notions behind the DL $\shdlssx$ and its reasoning problems. For space reasons, we refrain from reporting the syntax and semantics of the whole $\flqsr$, as the interested reader can find it in \cite{CanNic2013} together with the decision procedure for the satisfiability problem for $\flqsr$. Thus, we restrict our attention  to the class of $\flqsr$-formulae actually involved in the set-theoretic representation of $\shdlssx$, namely propositional combinations of $\flqsr$-quantifier-free literals (atomic formulae or their negations) and $\flqsr$ purely universal formulae of the types displayed in Table \ref{tablecore}.
For the sake of  conciseness we refer to such class of $\flqsr$-formulae as $\coreflqsr$.

We recall that the fragment $\flqsr$ admits four collections, $\var_i$, of variables of sort $i$, for $i=0,1,2,3$. Variables of sort $i$, for $i=0,1,2,3$, are denoted by $X^i,Y^i,Z^i, \ldots$ (in particular, variables of sort $0$ are also denoted by $x,y,z, \ldots$). In addition to variables, also \textit{pair terms} of the form $\langle x,y \rangle$, with $x,y \in \var_0$ are allowed. Since the types of formulae illustrated in Table \ref{tablecore} do not involve variables of sort $2$, notions and definitions concerning $\coreflqsr$-formulae %that we will give from now on 
refer to variables of sorts $0,1$, and $3$ only. 

\vspace{-0.3cm}
\begin{table}[]
	\centering	\scriptsize	
	\renewcommand{\arraystretch}{1.8}
	\begin{tabular}{cc}
		%		\cline{1-2}
		\hline
		
		\multicolumn{1}{|c|}{\cellcolor{lightgray}Quantifier-free literals of level 0}      & \multicolumn{1}{c|}{\cellcolor{lightgray}Purely universal quantified formulae of level 1}                 \\ 
		\hline
		\multicolumn{1}{|c|}{$x=y, \; x \in X^1, \; \langle x,y \rangle \in X^3$} & \multicolumn{1}{c|}{\multirow{2}{*}{ \shortstack[c]{$(\forall z_1)\ldots(\forall z_n)\varphi_0$, where $z_1, \ldots, z_n \in \varz $ and $\varphi_0$ is \\any propositional combination of quantifier-free \\atomic formulae of level 0   }}} \\
		\multicolumn{1}{|c|}{$\neg (x=y), \; \neg(x \in X^1), \; \neg(\langle x,y \rangle \in X^3)$} & 	
		
		\multicolumn{1}{c|}{}                       \\
		\cline{1-2}
		&                                             \\
	\end{tabular}
	
	\caption{Types of literals and quantified formulae admitted in $\coreflqsr$.}
	\label{tablecore}
\end{table}
\vspace{-0.8cm}

The variables $z_1,\ldots,z_n$ are said to occur \textit{quantified} in $(\forall z_1) \ldots (\forall z_n) \varphi_0$. A variable occurs \textit{free} in a $\coreflqsr$-formula $\varphi$ if it does not occur quantified in any subformula of $\varphi$. For $i = 0,1,3$, we denote with $\vari(\varphi)$ the collections of variables of sort $i$ occurring free in $\varphi$.

%Verificare Vars %and we put  $\Vars(\varphi) \defAs \bigcup_{i=0}^{3} \vari(\varphi)$.

Given sequences of distinct variables $\vec{x}$ (in $\var_0$), $\vec{X}^{1}$ (in $\var_1$), and $\vec{X}^{3}$ (in $\var_3$), of length $n$, $m$, and $q$, respectively, and sequences of (not necessarily distinct) variables $\vec{y}$ (in $\var_0$), $\vec{Y}^{1}$ (in $\var_1$), and $\vec{Y}^{3}$ (in $\var_3$), also of length $n$, $m$, and $q$, respectively, the %(level 0) 
$\coreflqsr$-substitution $\sigma \defAs \{ \vec{x}/\vec{y}, \vec{X}^1/\vec{Y}^1, \vec{X}^3/\vec{Y}^3\}$ is the mapping $\varphi \mapsto \varphi\sigma$ such that, for any given universal quantified $\coreflqsr$-formula $\varphi$, $\varphi\sigma$ is the $\coreflqsr$-formula obtained from $\varphi$ by replacing the free occurrences of the variable $x_i$ in $\vec{x}$ with the corresponding $y_i$ in $\vec{y}$ (for $i = 1,\ldots, n$), of $X^1_j$ in $\vec{X}^1$ with $Y^1_j$ in $\vec{Y}^1$ (for $j = 1,\ldots,m$), and of $X^3_h$ in $\vec{X}^3$ with $Y^3_h$ in $\vec{Y}^3$ (for $h= 1,\ldots,q$), respectively.  A substitution $\sigma$ is \emph{free} for $\varphi$ if the formulae $\varphi$ and $\varphi\sigma$ have exactly the same 
occurrences of quantified variables. The \emph{empty substitution}, denoted by $\epsilon$, satisfies $\varphi \epsilon = \varphi$, for each $\coreflqsr$-formula $\varphi$.

A $\coreflqsr$-\emph{interpretation} is a pair $\mathbfcal{M}=(D,M)$, where $D$ is a nonempty collection of objects (called \emph{domain} or \emph{universe} of $\mathbfcal{M}$) and $M$ is an assignment over the variables in $\mathcal{V}_i$, for $i=0,1,3$,  such that: $MX^{0} \in D,MX^1 \in \pow(D)$, and $MX^3 \in \pow(\pow(\pow(D)))$, where $ X^{i} \in \mathcal{V}_i$, for $i=0,1,3$, and $\pow(s)$ denotes the powerset of $s$.

\smallskip
\noindent
Pair terms are interpreted \emph{\`a la} Kuratowski, and therefore we put \\[.1cm]
\centerline{$M \langle x,y \rangle \defAs \{ \{ Mx \},\{ Mx,My \} \}$.}
%\footnote{We recall that an \'a la Kuratowski ordered pair $(x,y)$ is defined as $(x,y)_K := \{\{x\}, \{x,y\}\}$.}
Next, let
\begin{itemize}[topsep=0.1cm, itemsep=0cm]
	\item[-] $\mathbfcal{M}=(D,M)$ be a $\coreflqsr$-interpretation,
	
	\item[-] $x_1,\ldots,x_n \in \mathcal{V}_0$, and
	
	\item[-] $u_1, \ldots, u_n \in D$.
\end{itemize}

\smallskip
\noindent
By $\mathbfcal{M}[ \vec{x}  / \vec{u}]$, we denote the interpretation $\mathbfcal{M}'=(D,M')$ such that $M'x_i =u_i$ (for $i=1,\ldots,n$). For a $\coreflqsr$-interpretation $\mathbfcal{M} =(D,M)$ and a $\coreflqsr$-formula $\varphi$, the satisfiability relationship $ \mathbfcal{M} \models \varphi$ is recursively defined over the structure of $\varphi$ as follows. Quantifier-free literals are evaluated in a standard way according to the usual meaning of the predicates `$\in$'
and `$=$', and of the propositional negation %operation
`$\neg$'. Purely universal formulae are evaluated as follows:\\
\centerline{
	$\mathbfcal{M}  \models (\forall z_1) \ldots (\forall z_n) \varphi _0$ \quad iff \quad $\mathbfcal{M}   [ \vec{z} / \vec{u}] \models \varphi_0$, for all $\vec{u} \in D^{n}$.
}
Finally, compound formulae are interpreted according to the standard rules of propositional logic. If $\mathbfcal{M} \models \varphi$, then $\mathbfcal{M} $ is said to be a $\coreflqsr$-model for $\varphi$. A $\coreflqsr$-formula is said to be \emph{satisfiable} if it has a $\coreflqsr$-model. A $\coreflqsr$-formula is \emph{valid} if it is satisfied by all $\coreflqsr$-interpretations. 
%Let $\varphi$ and $\psi$ be $\flqs$-formulae. 
%$\psi$ is said a \textit{logical consequence} of $\varphi$ (we write $\varphi \models \psi$), if $\M \models \varphi$ implies $\M \models \psi$, for every $\flqs$-interpret

\subsection{The logic $\dlssx$}\label{dlssx}
%The DL $\dlssx$ (as remarked, more simply referred to as $\shdlssx$) is an extension of the  logic $\dlss$ presented in \cite{CanLonNicSanRR2015}, where Boolean operations on concrete roles and the product of concepts are defined. 
\begin{sloppypar}
In what follows we introduce the syntax and the semantics of the DL $\dlssx$ (as remarked above, more simply referred to as $\shdlssx$).
\end{sloppypar}
Let $\Ra$, $\Rd$, $\mathbf{C}$, $\mathbf{I}$ be denumerable pairwise disjoint sets of abstract role names, concrete role names, concept names, and individual names, respectively. %We assume that the set of abstract role names $\Ra$ contains a name $U$ denoting the universal role. 

%In addition to other features, $\shdlssx$ admits also data types, a simple form of concrete domains that are relevant in real-world applications. In particular, it treats derived data types by admitting data type terms constructed from data ranges by means of a finite number of applications of the Boolean operators. Basic and derived data types can be used inside inclusion axioms involving concrete roles.

Definition of data types relies on the notion of data type map, given  according to \cite{Motik2008} as follows. Let $\D = (N_{D}, N_{C},N_{F},\cdot^{\D})$ be a \emph{data type map}, where  $N_{D}$ is a finite set of data types, $N_{C}$ is a function assigning a set of constants $N_{C}(d)$ to each data type $d \in N_{D}$, $N_{F}$ is a function assigning a set of facets $N_{F}(d)$ to each $d \in N_{D}$, and $\cdot^{\D}$ is a function assigning a data type interpretation $d^{\D}$ to each  $d \in N_{D}$, a facet interpretation $f^{\D} \subseteq d^{\D}$ to each facet $f \in N_{F}(d)$, and a data value $e_{d}^{\D} \in d^{\D}$ to every constant $e_{d} \in N_{C}(d)$.  We shall assume that the interpretations of the data types in $N_{D}$ are nonempty pairwise disjoint sets.

\vipcomment{An abstract role hierarchy $\mathsf{R}_{a}^{H}$ is a finite collection of RIAs.  A strict partial order $\prec$ on  $\Ra \cup \{ R^- \mid R \in \Ra \}$ is called \emph{a regular order} if $\prec$ satisfies, additionally, $S \prec R$ iff $S^- \prec R$, for all roles R and S.\footnote{We recall that a strict partial order $\prec$  on a set $A$ is an irreflexive and transitive relation on $A$.}}

\noindent
(a) $\shdlssx$-\emph{data type}, (b) $\shdlssx$-\emph{concept}, (c) $\shdlssx$-\emph{abstract role}, and (d) $\shdlssx$-\emph{concrete role terms} are constructed according to the following syntax rules:
\begin{itemize}
	\item[(a)] $t_1, t_2 \longrightarrow dr ~|~\neg t_1 ~|~t_1 \sqcap t_2 ~|~t_1 \sqcup t_2 ~|~\{e_{d}\}\, ,$
	
	\item[(b)] $C_1, C_ 2 \longrightarrow A ~|~\top ~|~\bot ~|~\neg C_1 ~|~C_1 \sqcup C_2 ~|~C_1 \sqcap C_2 ~|~\{a\} ~|~\exists R.\mathit{Self}| \exists R.\{a\}| \exists P.\{e_{d}\}\, ,$
	
	\item[(c)] $R_1, R_2 \longrightarrow S ~|~U ~|~R_1^{-1} ~|~ \neg R_1 ~|~R_1 \sqcup R_2 ~|~R_1 \sqcap R_2 ~|~R_{C_1 |} ~|~R_{|C_1} ~|~R_{C_1 ~|~C_2} ~|~id(C) ~|~ $
	
	$C_1 \times C_2    \, ,$
	
	\item[(d)] $P_1,P_2 \longrightarrow T ~|~\neg P_1 ~|~ P_1 \sqcup P_2 ~|~ P_1 \sqcap P_2  ~|~P_{C_1 |} ~|~P_{|t_1} ~|~P_{C_1 | t_1}\, ,$
\end{itemize}
where $dr$ is a data range for $\D$, $t_1,t_2$ are data type terms, $e_{d}$ is a constant in $N_{C}(d)$, $a$ is an individual name, $A$ is a concept name, $C_1, C_2$ are $\shdlssx$-concept terms, $S$ is an abstract role name, $U$ is an abstract role name denoting the universal role,  $R, R_1,R_2$ are $\shdlssx$-abstract role terms, $T$ is a concrete role name, and $P,P_1,P_2$ are $\shdlssx$-concrete role terms. We remark that data type terms are introduced in order to represent derived data types.

A $\shdlssx$-KB is a triple ${\mathcal K} = (\mathcal{R}, \mathcal{T}, \mathcal{A})$ such that $\mathcal{R}$ is a $\shdlssx$-$RBox$, $\mathcal{T}$ is a $\shdlssx$-$TBox$, and $\mathcal{A}$ a $\shdlssx$-$ABox$. 

A $\shdlssx$-$RBox$ is a collection of statements of the following forms:
\[
\begin{array}{cccccccccc}
R_1 \equiv R_2,~&R_1 \sqsubseteq R_2,~&~~R_1\ldots R_n \sqsubseteq R_{n+1}~~,~&\sym(R_1),~&\asym(R_1),\\
\refl(R_1),~& \irref(R_1),~&\mathsf{Dis}(R_1,R_2),~&\tra(R_1),~&\fun(R_1),\\
R_1 \equiv C_1 \times C_2,~&P_1 \equiv P_2,~&P_1 \sqsubseteq P_2,~&\mathsf{Dis}(P_1,P_2),~&\fun(P_1),
\end{array}
\]
where $R_1,R_2$ are $\shdlssx$-abstract role terms, $C_1, C_2$ are $\shdlssx$-abstract concept terms, and $P_1,P_2$ are $\shdlssx$-concrete role terms. Any expression of the type $w \sqsubseteq R$, where $w$ is a finite string of $\shdlssx$-abstract role terms and $R$ is an $\shdlssx$-abstract role term, is called a \emph{role inclusion axiom (RIA)}. 

A $\shdlssx$-$TBox$ is a set of statements of the types:
\begin{itemize}
	\item[-] $C_1 \equiv C_2$, $C_1 \sqsubseteq C_2$, $C_1 \sqsubseteq \forall R_1.C_2$, $\exists R_1.C_1 \sqsubseteq C_2$, $\geq_n\!\! R_1. C_1 \sqsubseteq C_2$, \\$C_1 \sqsubseteq {\leq_n\!\! R_1. C_2}$,
	\item[-] $t_1 \equiv t_2$, $t_1 \sqsubseteq t_2$, $C_1 \sqsubseteq \forall P_1.t_1$, $\exists P_1.t_1 \sqsubseteq C_1$, $\geq_n\!\! P_1. t_1 \sqsubseteq C_1$, $C_1 \sqsubseteq {\leq_n\!\! P_1. t_1}$,
\end{itemize}
%\marginpar{dire che t permette la costruzione di data type misti}
where $C_1,C_2$ are $\shdlssx$-concept terms, $t_1,t_2$ data type terms, $R_1$  a $\shdlssx$-abstract role term, $P_1$ a $\shdlssx$-concrete role term. Any statement of the form $C \sqsubseteq D$, with  $C$, $D$ $\shdlss$-concept terms, is a 
\emph{general concept inclusion axiom}.

A $\shdlssx$-$ABox$ is a set of \emph{individual assertions} of the forms: $a : C_1$, $(a,b) : R_1$, 
%$(a,b) : \neg R_1$, 
$a=b$, $a \neq b$, $e_{d} : t_1$, $(a, e_{d}) : P_1$, 
%$\[a : C_1,~~(a,b) : R_1,~~~~ 
%%$(a,b) : \neg R_1$, 
%a=b,~~~~a \neq b,~~~~e_{d} : t_1,~~~~(a, e_{d}) : P_1,
%\] 
%%$(a, e_{d}) : \neg P_1$, %$e_{d_1} = e_{d_2}$, $e_{d_1} \neq e_{d_2}$,
with $C_1$ a $\shdlssx$-concept term, $d$ a data type, $t_1$ a data type term, $R_1$ a $\shdlssx$-abstract role term, $P_1$ a $\shdlssx$-concrete role term, $a,b$ individual names, and $e_{d}$ a constant in $N_{C}(d)$.

The semantics of $\shdlssx$ is given by means of an interpretation $\I= (\Delta^\I, \Delta_{\D}, \cdot^\I)$, where $\Delta^\I$ and $\Delta_{\D}$ are non-empty disjoint domains such that $d^\D\subseteq \Delta_{\D}$, for every $d \in N_{D}$, and
$\cdot^\I$ is an interpretation function. The definition of the interpretation of concepts and roles, axioms, and assertions is shown in Table \ref{semdlss}.

{\small
	\begin{longtable}{|>{\centering}m{2.5cm}|c|>{\centering\arraybackslash}m{6.7cm}|}
		\hline
		Name & Syntax & Semantics \\
		\hline
		
		concept & $A$ & $ A^\I \subseteq \Delta^\I$\\
		
		ab. (resp., cn.) rl. & $R$ (resp., $P$ )& $R^\I \subseteq \Delta^\I \times \Delta^\I$ \hspace*{0.5cm} (resp., $P^\I \subseteq \Delta^\I \times \Delta_\D$)\\
		
		%concrete role & $T$ & $T^\I \subseteq \Delta^\I \times \Delta_\D$\\
		
		individual& $a$& $a^\I \in \Delta^\I$\\
		
		nominal & $\{a\}$ & $\{a\}^\I = \{a^\I \}$\\
		
		dtype  (resp., ng.) & $d$ (resp., $\neg d$)& $ d^\D \subseteq \Delta_\D$ (resp., $\Delta_\D \setminus d^\D $)\\
		
		%data range $dr$ & $ dr $ & $  dr^{\D}\subseteq \Delta_{\D} $ \\
		
		negative data type term & $ \neg t_1 $ & $  (\neg t_1)^{\D} = \Delta_{\D} \setminus t_1^{\D}$ \\
		
		data type terms intersection & $ t_1 \sqcap t_2 $ & $  (t_1 \sqcap t_2)^{\D} = t_1^{\D} \cap t_2^{\D} $ \\
		
		data type terms union & $ t_1 \sqcup t_2 $ & $  (t_1 \sqcup t_2)^{\D} = t_1^{\D} \cup t_2^{\D} $ \\
		
		constant in $N_{C}(d)$ & $ e_{d} $ & $ e_{d}^\D \in d^\D$ \\
		
		%negated data type & $\neg d$ & $\Delta_\D \setminus d^\D $\\
		
		%const. in $N_{C}(d)$ & $e_{d}$ & $e_{d}^{\D} \in d^\D$ \\
		
		\hline
		data range  & $\{ e_{d_1}, \ldots , e_{d_n} \}$& $\{ e_{d_1}, \ldots , e_{d_n} \}^\D = \{e_{d_1}^\D \} \cup \ldots \cup \{e_{d_n}^\D \} $ \\
		
		data range   &  $\psi_d$ & $\psi_d^\D$\\
		
		data range    & $\neg dr$ &  $\Delta_\D \setminus dr^\D $\\
		
		\hline
		
		top (resp., bot.) & $\top$ (resp., $\bot$ )& $\Delta^\I$  (resp., $\emptyset$)\\
		
		% & $\bot$ &  \\
		
		negation & $\neg C$ & $(\neg C)^\I = \Delta^\I \setminus C$ \\
		
		conj. (resp., disj.) & $C \sqcap D$ (resp., $C \sqcup D$)& $ (C \sqcap D)^\I = C^\I \cap D^\I$  (resp., $ (C \sqcup D)^\I = C^\I \cup D^\I$)\\
		
		%disjunction & $C \sqcup D$ & $ (C \sqcup D)^\I = C^\I \cup D^\I$ \\
		
		valued exist. quantification & $\exists R.{a}$ & $(\exists R.{a})^\I = \{ x \in \Delta^\I : \langle x,a^\I \rangle \in R^\I  \}$ \\
		
		data typed exist. quantif. & $\exists P.{e_{d}}$ & $(\exists P.e_{d})^\I = \{ x \in \Delta^\I : \langle x, e^\D_{d} \rangle \in P^\I  \}$ \\

		self concept & $\exists R.\mathit{Self}$ & $(\exists R.\mathit{Self})^\I = \{ x \in \Delta^\I : \langle x,x \rangle \in R^\I  \}$ \\
		
		nominals & $\{ a_1, \ldots , a_n \}$& $\{ a_1, \ldots , a_n \}^\I = \{a_1^\I \} \cup \ldots \cup \{a_n^\I \} $ \\
		
		\hline
		
		universal role & U & $(U)^\I = \Delta^\I \times \Delta^\I$\\
		
		inverse role & $R^-$ & $(R^-)^\I = \{\langle y,x \rangle  \mid \langle x,y \rangle \in R^\I\}$\\
		
		concept cart. prod. & $ C_1 \times C_2$   &  $ (C_1 \times C_2)^I = C_1^I \times C_2^I$ \\
		
		abstract role complement & $ \neg R $ & $ (\neg R)^\I=(\Delta^\I \times \Delta^\I) \setminus R^\I $\\
		
		abstract role union & $R_1 \sqcup R_2$ & $ (R_1 \sqcup R_2)^\I = R_1^\I \cup R_2^\I $\\
		
		abstract role intersection & $R_1 \sqcap R_2$ & $ (R_1 \sqcap R_2)^\I = R_1^\I \cap R_2^\I $\\
		
		abstract role domain restr. & $R_{C \mid }$ & $ (R_{C \mid })^\I = \{ \langle x,y \rangle \in R^\I : x \in C^\I  \} $\\

		concrete role complement & $ \neg P $ & $ (\neg P)^\I=(\Delta^\I \times \Delta^\D) \setminus P^\I $\\
		
		concrete role union & $P_1 \sqcup P_2$ & $ (P_1 \sqcup P_2)^\I = P_1^\I \cup P_2^\I $\\
		
		concrete role intersection & $P_1 \sqcap P_2$ & $ (P_1 \sqcap P_2)^\I = P_1^\I \cap P_2^\I $\\
		
		concrete role domain restr. & $P_{C \mid }$ & $ (P_{C \mid })^\I = \{ \langle x,y \rangle \in P^\I : x \in C^\I  \} $\\
		
		concrete role range restr. & $P_{ \mid t}$ &  $ (P_{\mid t})^\I = \{ \langle x,y \rangle \in P^\I : y \in t^\D  \} $\\
		
		concrete role restriction & $P_{ C_1 \mid t}$ &  $ (P_{C_1 \mid t})^\I = \{ \langle x,y \rangle \in P^\I : x \in C_1^\I \wedge y \in t^\D  \} $\\
		
		\hline
		
		concept subsum. & $C_1 \sqsubseteq C_2$ & $\I \models_\D C_1 \sqsubseteq C_2 \; \Longleftrightarrow \; C_1^\I \subseteq C_2^\I$ \\
		
		ab. role subsum. & $ R_1 \sqsubseteq R_2$ & $\I \models_\D R_1 \sqsubseteq R_2 \; \Longleftrightarrow \; R_1^\I \subseteq R_2^\I$\\
		
		role incl. axiom & $R_1 \ldots R_n \sqsubseteq R$ & $\I \models_\D R_1 \ldots R_n \sqsubseteq R  \; \Longleftrightarrow \; R_1^\I\circ \ldots \circ R_n^\I \subseteq R^\I$\\
		cn. role subsum. & $ P_1 \sqsubseteq P_2$ & $\I \models_\D P_1 \sqsubseteq P_2 \; \Longleftrightarrow \; P_1^\I \subseteq P_2^\I$\\
		
		\hline
		
		symmetric role & $\sym(R)$ & $\I \models_\D \sym(R) \; \Longleftrightarrow \; (R^-)^\I \subseteq R^\I$\\
		
		asymmetric role & $\asym(R)$ & $\I \models_\D \asym(R) \; \Longleftrightarrow \; R^\I \cap (R^-)^\I = \emptyset $\\
		
		transitive role & $\tra(R)$ & $\I \models_\D \tra(R) \; \Longleftrightarrow \; R^\I \circ R^\I \subseteq R^\I$\\
		
		disj. ab. role & $\mathsf{Dis}(R_1,R_2)$ & $\I \models_\D \mathsf{Dis}(R_1,R_2) \; \Longleftrightarrow \; R_1^\I \cap R_2^\I = \emptyset$\\
		
		reflexive role & $\refl(R)$& $\I \models_\D \refl(R) \; \Longleftrightarrow \; \{ \langle x,x \rangle \mid x \in \Delta^\I\} \subseteq R^\I$\\
		
		irreflexive role & $\irref(R)$& $\I \models_\D \irref(R) \; \Longleftrightarrow \; R^\I \cap \{ \langle x,x \rangle \mid x \in \Delta^\I\} = \emptyset  $\\
		
		func. ab. role & $\fun(R)$ & \scriptsize{$\I \models_\D \fun(R) \; \Longleftrightarrow \; (R^{-})^\I \circ R^\I \subseteq  \{ \langle x,x \rangle \mid x \in \Delta^\I\}$}  \\
		
		disj. cn. role & $\mathsf{Dis}(P_1,P_2)$ & $\I \models_\D \mathsf{Dis}(P_1,P_2) \; \Longleftrightarrow \; P_1^\I \cap P_2^\I = \emptyset$\\
		
		func. cn. role & $\fun(P)$ & $\I \models_\D \fun(p) \; \Longleftrightarrow \; \langle x,y \rangle \in P^\I \mbox{ and } \langle x,z \rangle \in P^\I \mbox{ imply } y = z$  \\
		
		\hline
		
		data type terms equivalence & $ t_1 \equiv t_2 $ & $ \I \models_{\D} t_1 \equiv t_2 \Longleftrightarrow t_1^{\D} = t_2^{\D}$\\
		
		data type terms diseq. & $ t_1 \not\equiv t_2 $ & $ \I \models_{\D} t_1 \not\equiv t_2 \Longleftrightarrow t_1^{\D} \neq t_2^{\D}$\\
		
		data type terms subsum. & $ t_1 \sqsubseteq t_2 $ &  $ \I \models_{\D} (t_1 \sqsubseteq t_2) \Longleftrightarrow t_1^{\D} \subseteq t_2^{\D} $ \\
		
		\hline
		
		concept assertion & $a : C_1$ & $\I \models_\D a : C_1 \; \Longleftrightarrow \; (a^\I \in C_1^\I) $ \\
		
		agreement & $a=b$ & $\I \models_\D a=b \; \Longleftrightarrow \; a^\I=b^\I$\\
		
		disagreement & $a \neq b$ & $\I \models_\D a \neq b  \; \Longleftrightarrow \; \neg (a^\I = b^\I)$\\
		
		%%datatype agreement & $ d_1=d_2$ & $ d_{1}^\D = d_{2}^\D$\\
		%%
		%%datatype disagreement & $ d_{1} \neq d_{2}$ & $\neg (d_{1}^\D = d_{2}^\D)$\\
		
		ab. role asser. & $ (a,b) : R $ & $\I \models_\D (a,b) : R \; \Longleftrightarrow \;  \langle a^\I , b^\I \rangle \in R^\I$ \\
		
		cn. role asser. & $ (a,e_d) : P $ & $\I \models_\D (a,e_d) : P \; \Longleftrightarrow \;   \langle a^\I , e_d^\D \rangle \in P^\I$ \\

		\hline \caption{Semantics of $\shdlssx$.}\\
		\caption*{\emph{Legenda.} \emph{ab.}: abstract, \emph{cn.}: concrete, \emph{rl.}: role, \emph{ind.}: individual, \emph{d. cs.}: data type constant, \emph{dtype}: data type, \emph{ng.}: negated, \emph{bot.}: bottom, \emph{incl.}: inclusion, \emph{asser.}: assertion.}  \label{semdlss}
\end{longtable}}

Let $\mathcal{R}$, $\mathcal{T}$, and $\mathcal{A}$  be as above. An interpretation $\I= (\Delta ^ \I, \Delta_{\D}, \cdot ^ \I)$ is a $\D$-model of $\mathcal{R}$ (resp., $\mathcal{T}$), and we write $\I \models_{\D} \mathcal{R}$ (resp., $\I \models_{\D} \mathcal{T}$), if $\I$ satisfies each axiom in $\mathcal{R}$ (resp., $\mathcal{T}$) according to the semantic rules in Table \ref{semdlss}.  Analogously,  $\I= (\Delta^ \I, \Delta_{\D}, \cdot^\I)$ is a $\D$-model of $\mathcal{A}$, and we write $\I \models_{\D} \mathcal{A}$, if $\I$ satisfies each assertion in $\mathcal{A}$, according to the semantic rules in Table \ref{semdlss}. 

A $\shdlssx$-KB $\mathcal{K}=(\mathcal{A}, \mathcal{T}, \mathcal{R})$ is consistent if there exists an interpretation $\I= (\Delta^ \I, \Delta_{\D}, \cdot^\I)$ that is a $\D$-model of $\mathcal{A}$,  $\mathcal{T}$, and $\mathcal{R}$.

%Some considerations on the expressive power of $\shdlssx$ are in order. As illustrated in \cite[Table~ 1]{RR2017ext} existential quantification is admitted only on the left hand side of inclusion axioms. Thus $\shdlssx$ is less powerful than logics such as $\sroiqd\space$ \cite{Horrocks2006}
%for what concerns the generation of new individuals. On the other hand, $\shdlssx$ is more liberal than $\sroiqd\space$ in the definition of role inclusion axioms since roles involved are not required to be subject to any ordering relationship, and the notion of simple role is not needed.
%For example, the role hierarchy presented in \cite[page~ 2]{Horrocks2006} is not expressible in $\sroiqd\space$ but can be represented in $\shdlssx$. In addition, $\shdlssx$ is a powerful rule language able to express rules with  negated atoms such as  $Person(?p) \wedge \neg hasCar(?p, ?c) \implies CarlessPerson(?p)$. Notice that rules with negated atoms are not supported by the SWRL language. 
\subsubsection{Expressiveness of the DL $\shdlssx$.}

Despite the fact that the description logic $\shdlssx$ is limited as far as the introduction of new individuals is concerned, 
%
%allows one to express existential quantification and at-least number restriction only on the left-hand side of inclusion axioms and universal quantification and at-most number restriction on the right-side of inclusion axioms, 
it is more liberal than $\sroiqd\space$ \cite{Horrocks2006} in the construction of role inclusion axioms, since the roles involved are not restricted by any ordering relationship, the notion of simple role is not needed, and Boolean operations on roles and role constructs such as the product of concepts  are admitted.
Moreover, %despite $\shdlssx$ is limited for what concerns the introduction of new individuals, 
$\shdlssx$ supports more OWL constructs than the DLs underpinning the profiles OWL QL, OWL RL, and OWL EL \cite{Krotzsch2012}, such as disjoint union of concepts and union of data ranges. Furthermore, basic and derived data types can be used inside inclusion axioms involving concrete roles. In addition, concerning the expressiveness of rules, the set-theoretic fragment $\coreflqsr$ underpinning $\shdlssx$ allows one to express the disjunctive Datalog fragment  admitting negation, equality and constraints, subject to no safety condition, and supporting for data types.

\subsubsection{Reasoning with the DL $\shdlssx$.}  
Next, we introduce the reasoning services available for the DL $\shdlssx$, i.e., the type of inferences that can be drawn from what is explicitly asserted in a $\shdlssx$-KB. %Then, we present a sound, complete, and terminating inference procedure that solves the problem of reasoning in $\shdlssx$. 
%The most important $\shdlssx$- reasoning task is the problem of deciding if a $\shdlssx$-knowledge base is consistent. 
Specifically, we focus on two families of reasoning tasks, one concerning TBoxes and the other one regarding ABoxes. Among the main TBox reasoning problems, such as \emph{satisfiability of a concept}, \emph{subsumption of concepts}, \emph{equivalence of concepts}, and \emph{disjunction of concepts}, the problem of deciding the consistency of a $\shdlssx$-KB is the most representative one, since it comprehends the majority of them.\footnote{A separate analysis is required by the classification problem of a TBox, consisting in the computation of ancestor and descendant concepts of a given concept in a TBox, and that will be addressed in a future work.} In \cite{ictcs16} we proved the decidability of   the consistency problem of a $\shdlssx$-KB and of a relevant ABox reasoning task, namely the \emph{Conjunctive Query Answering} (CQA) problem for $\shdlssx$ consisting in computing the answer set of a $\shdlssx$-conjunctive query with respect to a $\shdlssx$-KB. In \cite{RR2017} we generalized the problem introducing the \emph{Higher Order Conjuctive Query Answering} (HOCQA) problem for $\shdlssx$. Such problem is characterized by \emph{Higher Order} (HO) $\shdlssx$-conjunctive queries  admitting variables of three sorts: individual and data type variables, concept variables, and role variables. It consists in finding the HO-answer set of a HO $\shdlssx$-conjunctive query with respect to a $\shdlssx$-KB.
%We now introduce formally the problem of \textit{High Order Conjunctive Query Answering}.\\

%$\sfvar{d}{1},\ldots,\sfvar{v}{l} \in \varind$

Specifically, let $\varind  = \{\sfvar{v}{1}, \sfvar{v}{2}, \ldots\}$, 
$\vare = \{\sfvar{e_1}, \sfvar{e_2}, \ldots\}$,
$\vardt =\{\sfvar{t_1}, \sfvar{t_2}, \ldots \}$, 
$\varcon = \{\sfvar{c}{1}, \sfvar{c}{2}, \ldots\}$, 
$\varar = \{\sfvar{r}{1}, \sfvar{r}{2}, \ldots\}$,  
and $\varcr  = \{\sfvar{p}{1},$ $\sfvar{p}{2}, \ldots\}$ be pairwise disjoint denumerably infinite sets of variables %which are 
disjoint from $\Ind$, $\bigcup\{N_C(d): d \in N_{\D}\}$, $\C$, $\Ra$, and $\Rd$. HO-$\shdlssx$-\emph{atomic formulae} are expressions of the following types:
%\begin{center}
%\begin{tabular}{ c c c c}
\[
R(w_1,w_2),%~~ 
%& 
P(w_1, u),%~~ 
%&  
C(w_1),%~~ 
%& 
t(u),%~~ 
%& 
\mathsf{r}(w_1,w_2),%~~
%\\[.1cm]
\mathsf{p}(w_1, u),%~~ 
%& 
\mathsf{c}(w_1),%~~ 
%& 
%&
\sfvar{t}{} (u),
w_1=w_2,%~~ 
\] 
%\\ 
%\end{tabular}
%\end{center}
where $w_1,w_2 \in \varind \cup \Ind$, $u \in  \vare \cup \bigcup \{N_C(d): d \in N_{\D}\}$, $R$ is a $\shdlssx$-abstract role term, $P$ is a $\shdlssx$-concrete role term, $C$ is a $\shdlssx$-concept term,  $t$ is a $\shdlssx$-data type term, $\mathsf{r} \in \varar$, $\mathsf{p} \in \varcr$, $\mathsf{c} \in \varcon$, $\sfvar{t}{} \in \vardt$. A HO $\shdlssx$-atomic formula containing no variables is said to be \emph{ground}. A HO $\shdlssx$-\emph{literal} is a HO $\shdlssx$-atomic formula or its negation. 
%A clause is a finite disjunction of literals. 
A HO $\shdlssx$-\emph{conjunctive query} is a conjunction of HO $\shdlssx$-literals. 
We denote with $\lambda$ the \emph{empty} HO $\shdlssx$-conjunctive query.

Let  $\sfvar{v}{1},\ldots, \sfvar{v}{n} \in \varind$, $\sfvar{e}{1},\ldots,\sfvar{e}{g} \in \vare$, $\sfvar{t}{1},\ldots,\sfvar{t}{l} \in \vardt$,$\sfvar{c}{1}, \ldots, \sfvar{c}{m} \in \varcon$, $\sfvar{r}{1}, \ldots, \sfvar{r}{k} \in \varar$, $\sfvar{p}{1}, \ldots, \sfvar{p}{h} \in \varcr$, $o_1, \ldots, o_n \in \Ind$, $e_{d_1}, \ldots, e_{d_g} \in  \bigcup \{N_C(d): d \in N_{\D}\}$, $C_1, \ldots, C_m \in \C$, $R_1, \ldots, R_k \in \Ra$, and $P_1, \ldots, P_h \in \Rd$.
A substitution 	$\sigma  \defAs \{\sfvar{v}{1}/o_1, \ldots, \sfvar{v}{n}/o_n, \sfvar{e}{1}/{e_{d_1}}, \ldots, \sfvar{e}{g}/{e_{d_g}},  \sfvar{t}{1}/t_1, \ldots, \sfvar{t}{l}/t_l,\sfvar{c}{1}/{C_1}, \ldots, \sfvar{c}{m}/{C_m},\\ \null ~~\sfvar{r}{1}/{R_1}, \ldots, \sfvar{r}{k}/{R_k}, \sfvar{p}{1} /{P_1}, \ldots, \sfvar{p}{h}/{P_h} \}$ 	
is a map such that, for every  HO-$\shdlssx$-literal $L$, $L\sigma$ is obtained from $L$ by replacing:
%\begin{itemize}
%\item 
the occurrences of $\sfvar{v}{i}$ in $L$ with $o_i$, for $i=1, \ldots, n$,
the occurrences of $\sfvar{e}{b}$ in $L$ with $d_b$, for $b=1, \ldots, g$,
%\item 
the occurrences of $\sfvar{t}{s}$ in $L$ with $t_s$, for $s=1, \ldots, l$,
%item
the occurrences of $\sfvar{c}{j}$ in $L$ with $C_j$, for $j=1, \ldots, m$,
%\item 
the occurrences of $\sfvar{r}{\ell}$ in $L$ with $R_\ell$, for $\ell=1, \ldots, k$,
%\item 
the occurrences of $\sfvar{p}{t}$ in $L$ with $P_t$, for $t=1, \ldots, h$.

Substitutions can be extended to HO $\shdlssx$-conjunctive queries in the usual way. 
%$T_1 \wedge \ldots \wedge T_n$ is defined as follows: $(T_1 \wedge \ldots \wedge T_m)\sigma =_{Def} T_1 \sigma \wedge \ldots \wedge T_m \sigma$.
Let $Q \defAs  (L_1 \wedge \ldots \wedge L_m)$ be a HO $\shdlssx$-conjunctive query, and $\KB$ a $\shdlssx$-KB. A substitution $\sigma$ involving \emph{exactly} the variables occurring in $Q$ is a \emph{solution for $Q$ w.r.t.\ $\KB$} if there exists a $\shdlssx$-interpretation $\I$ such that $\I \models_{\D} \KB$ and $\I \models_{\D} Q \sigma$. The collection $\Sigma$ of the  solutions for $Q$ w.r.t.\ $\KB$ is the \emph{HO-answer set of $Q$ w.r.t.\ $\KB$}. Then the \emph{HOCQA problem} for $Q$ w.r.t.\ $\KB$ consists in finding the HO-answer set $\Sigma$ of $Q$ w.r.t.\ $\KB$. 

As illustrated in \cite{RR2017}, the HOCQA problem can be instantiated to significant ABox reasoning problems such as %(1.) \emph{Instance checking}, the problem of deciding whether or not an individual $a$ is an instance of a concept $C$; (2.) \emph{Instance retrieval}, the problem of retrieving all the individuals that are instances of a given concept;
 (A) \emph{role filler retrieval}, the problem of retrieving all the fillers $x$ such that the pair $(a,x)$ is an instance of a role $R$; (B) \emph{concept retrieval}, the problem of retrieving all concepts which an individual is an instance of; (C) \emph{role instance retrieval}, the problem of retrieving all roles which a pair of individuals $(a,b)$ is an instance of; and (D) \emph{conjunctive query answering}, the problem of finding the answer set of a conjunctive query. %(see \cite{ictcs16} for a formal definition of CQA).

%In particular the CQA problem for $\shdlssx$ defined in \cite{ictcs16} is an instance of the HOCQA problem with HO $\shdlssx$-conjunctive queries of the form $(L_1 \wedge \ldots \wedge L_m)$, where the conjuncts $L_i$ are atomic formulae of any of the types $R(w_1,w_2)$, $C(w_1)$, $P(w_1, u_1)$, and $w_1=w_2$ (or their negation), with  $w_1,w_2 \in (\Ind \cup\varind)$ and $u_1 \in \vardt   \cup \bigcup \{N_C(d): d \in N_{\D}\}$. As illustrated in \cite{RR2017} problems 1 and 2 for $\shdlssx$ are instances of the CQA problem for $\shdlssx$, whereas problems 3 and 4 for $\shdlssx$ falling outside the definition of CQA for $\shdlssx$ can be easily defined as instances of the HOCQA problem. %The HOCQA problem can be instantiated to the problem 3 by allowing  HO $\shdlssx$-conjunctive queries of the form $Q_{QR}=c(w_1)$, with $w_1 \in \Ind$ and $c$ a variable in $\varcon$, and it can be specialized to problem 4 by admitting HO $\shdlssx$-conjunctive queries of the form $Q_{RI}=r(w_1,w_2)$ and $Q_{PI}=p(w_1,u_1)$, with $w_1,w_2 \in \Ind$, $r$  a  variable in $\varar$, $u_1 \in \vardt \cup \bigcup \{N_C(d): d \in N_{\D}\}$, and p a variable in $\varcr$.

In \cite{RR2017} we solved the HOCQA problem just stated by reducing it to the analogous problem formulated in the context of the fragment $\coreflqsr$ (and in turn to the decision procedure for $\flqsr$ presented in \cite{CanNic2013}).

The HOCQA problem for $\coreflqsr$-formulae can be stated as follows.
%Before addressing the CQA problem for $\shdlssx$, it is useful to consider first the same problem in the context of $\flqsr$-formulae.  
%the \textit{conjunctive query answering} problem in the context of the $\flqsr$-fragment.
Let $\phi$ be a $\coreflqsr$-formula and $\psi$ a conjunction of $\coreflqsr$-quantifier-free literals. %of level $0$ of the types $x=y$, $x \in X^1$, $ \langle x,y \rangle \in X^3$,
%\qquad $ x \in Z^1$, \qquad$\langle x,y \rangle \in Z^3$ 
%or their negations.
%, such that $\vari(\psi) \cap \vari(\phi) \neq \emptyset$, for $i=0,1,2,3$. %and $\varu(\psi) \cup \vart(\psi) \subseteq \varu(\phi) \cup \vart(\phi)$. 
The \emph{HOCQA problem for $\psi$ w.r.t.\\ $\phi$} consists in computing the HO \emph{answer set of $\psi$ w.r.t.\\ $\phi$}, namely the collection $\Sigma'$ of all the  substitutions $\sigma'$ such that  $\M \models \phi \wedge \psi\sigma'$, for some $\coreflqsr$-interpretation $\M$.

%The \emph{CQA problem for $\psi$ w.r.t.\\ $\phi$} consists in computing the \emph{answer set of $\psi$ w.r.t.\\ $\phi$}, namely the collection $\Sigma'$ of all the  substitutions\\[0.3cm]
%\centerline{$\sigma' \defAs \{ \vec{x} / \vec{z}, \vec{X^1} / \vec{Z^1}, \vec{X^2} / \vec{Z^2}, \vec{X^3} / \vec{Z^3}  \}$}\\[0.3cm]
%(where $x_1, \ldots, x_n$, $X^1_1, \ldots, X^1_m$, $X^2_1, \ldots, X^2_p$, $X^3_1, \ldots, X^3_h$, are the distinct variables of level 0, 1, 2, 3 respectively in $\psi$, and  $\{z_1, \ldots, z_n\} \subseteq \varz(\phi)$, $\{ Z^1_1, \ldots, Z^1_m \} \subseteq \varu(\phi)$, $\{Z^2_1, \ldots, Z^2_p\} \subseteq \vard(\phi)$,  $\{Z^3_1, \ldots, Z^3_p\} \subseteq \vart(\phi)$ ) such that $\M \models \phi \wedge \psi\sigma'$, for some $\flqsr$-interpretation $\M$.

In view of the decidability of the satisfiability problem for $\flqsr$-formulae, the HOCQA problem for $\coreflqsr$-formulae is decidable as well. 

%Commentato il 9 Marzo
%Indeed, let  $\phi$ and $\psi$ be two $\flqsr$-formulae fulfilling the above requirements. To calculate the HO-answer set of $\psi$ w.r.t.\\ $\phi$, for each candidate substitution\\ %$\sigma' \defAs \{x_1 / y_1, \ldots, x_n / y_n \}$ (with $\{x_1, \ldots, x_n \}=  \varz (\psi) $ and  $\{y_1, \ldots, y_n\} \subseteq \varz(\phi)$)
%\centerline{$\sigma' \defAs \{ \vec{x} / \vec{z}, \vec{X^1} / \vec{Y^1}, \vec{X^2} / \vec{Y^2}, \vec{X^3} / \vec{Y^3}  \}$} 
%one has just to check the $\flqsr$-formula $\phi \wedge \psi\sigma'$ for satisfiability. Since the number of possible candidate substitutions is $\left| \Vars(\phi) \right|^{\left| \Vars(\psi) \right|}$
%%$\mathrel{\bigg|} \underset{i=0}{\overset{3}{\bigcup}} \vari (\phi)\mathrel{\bigg|}^{ \mathrel{\bigg|} \underset{i=0}{\overset{3}{\bigcup}} \vari (\psi) \mathrel{\bigg|}}$ 
%and the satisfiability problem for $\flqsr$-formulae is decidable, the HO-answer set of $\psi$ w.r.t.\\ $\phi$ can be computed effectively

The reduction is carried out by means of a %transformation 
function $\theta$  that maps the $\shdlssx$-KB $\KB$ in a $\coreflqsr$-formula $\phi_{\KB}$ in Conjunctive Normal Form (CNF) and the HO $\shdlssx$-conjunctive query $Q$ in the $\coreflqsr$-formula $\psi_{Q}$.  Specifically,\footnote{The map $\theta$ coincides with the transformation function defined in \cite{ictcs16} as far as it concerns the translation of each axiom or assertion $H$ of  $\KB$ in a set-theoretic formula $\theta(H)$. The map $\theta$  extends the function introduced in \cite{ictcs16} as far as it concerns the translation of the HO query $Q$ and of the substitutions $\sigma$ of the HO-answer set $\Sigma$. In particular, it maps effectively variables in $\varcon$ in variables of sort 1 (in the language of $\coreflqsr$), and variables in $\varar$ and in $\varcr$ in variables of sort 3. $\xi_1$--$\xi_{12}$ are constraints added to make sure that each $\coreflqsr$-model of $\phi_{\KB}$ can be transformed into a $\shdlssx$-interpretation (cf.\ \cite[Theorem~ 1]{ictcs16}). }

\centerline{
	$\phi_{\KB} \defAs \bigwedge_{H \in \KB} \theta(H) \wedge \bigwedge_{i=1}^{12} \xi_i, \qquad \psi_Q \defAs \theta(Q)\,.$
}
Let $\Sigma$ be the HO-answer set of $Q$ w.r.t.\\ $\KB$ and  $\Sigma'$ the HO-answer set of $\psi_Q$ w.r.t.\\ $\phi_{\KB}$. Then $\Sigma$ consists of all substitutions $\sigma$ (involving exactly the variables occurring in $Q$) such that $\theta(\sigma) \in \Sigma'$.
By Lemma 1 in \cite{RR2017}, $\Sigma'$ can be calculated effectively and thus $\Sigma$ can be calculated effectively as well.

%In the next section, we provide an algorithm for the reasoning tasks concerning both $\shdlssx$- Tbox and $\shdlssx$- Abox.

\section{A  \ke\space based algorithm for reasoning in $\shdlssx$ }

In what follows, we introduce various definitions and notations useful for the presentation of  the procedures $\consistency$ and $\prochoplus$. 
The procedure $\consistency$ takes as input a $\coreflqsr$-formula  $\phi_{\KB}$ representing a $\shdlssx$-KB  and checks its consistency. If $\phi_{\KB}$ is consistent, the procedure $\consistency$ builds a \ke\space $\T_{\KB}$ whose distinct open and complete branches induce the models of $\phi_{\KB}$.  %Specifically, $\consistency$ returns $\T_{\KB}$ and a set $\mathcal{E}$ of pairs $(\vartheta, \sigma_\vartheta)$, such that $\vartheta'=\vartheta\sigma_\vartheta$ is  an open and complete branch of $\T_{\KB}$ and $\sigma_\vartheta$ is a substitution such that $\vartheta\sigma_{\vartheta}$ does not contains literals of type $x=y$.
 Then the procedure $\prochoplus$ computes the answer set of a given $\coreflqsr$-formula $\psi_Q$, representing a $\shdlssx$-HO conjunctive query $Q$, with respect to $\phi_{\KB}$ by means of a forest of decision trees based on the branches of the \ke\space $\T_\KB$ computed by the procedure $\consistency$ with input $\phi_\KB$. %Specifically, $\prochoplus$ takes as input the $\flqsr$-formula $\psi_Q$ and the set $\mathcal{E}$ computed by $\consistency$ with input $\phi_\KB$.

We recall that \ke\space is a refutation system inspired to Smullyan's semantic tableaux \cite{smullyan1995first} (see \cite{dagostino1999} for details). It differs from the latter because it includes an analytic cut rule (PB-rule)  minimizing the inefficiencies of semantic tableaux. In fact, firstly, the classic tableau system cannot represent the use of auxiliary lemmas in proofs; secondly, it cannot express the bivalence of classical logic. Thirdly, it is extremely inefficient, as shown by the fact that it cannot polynomially simulate the truth-tables. If the cut rule is admitted, none of these anomalies occurs. Before defining the procedures to be given next, we shortly introduce a variant of \ke\space called \keg.  
%\subsection{Saturation of a $\shdlssx$ knowledge base} \label{sec:saturateKB}
%For these reasons, Procedure   $\prochoplus$  constructs a complete \ke\space $\T_{\KB}$ for $\phi_\KB$.

%Parte Nuova
%
%
%

Let $\Phi \defAs \{ C_1,\ldots, C_p\}$, where each $C_i$ is either a $\coreflqsr$-quantifier free literal of the types illustrated in Table \ref{tablecore} or a $\coreflqsr$-purely universal quantified formula of the form $(\forall{x_1})\ldots(\forall{x_m})(\beta_1 \vee \ldots \vee \beta_n)$, where $\beta_1,\ldots, \beta_n$ are $\coreflqsr$-quantifier free literals. $\mathcal{T}$ is a \emph{\keg} for $\Phi$ if there exists a finite sequence $\mathcal{T}_1, \ldots, \mathcal{T}_t$ such that (i) $\mathcal{T}_1$ is the one-branch tree consisting of the sequence $C_1,\ldots, C_p$, (ii) $\mathcal{T}_t = \mathcal{T}$, and (iii) for each $i<t$, $\mathcal{T}_{i+1}$ is obtained from $\mathcal{T}_i$ either by an application of one of the rules ($\egamma$ or PB-rule) in Fig.~\ref{exprule} or by applying a substitution $\sigma$ to a branch $\vartheta$ of $\mathcal{T}_i$ (in particular, the substitution $\sigma$ is applied to each formula $X$ of $\vartheta$ and the resulting branch will be denoted with $\vartheta\sigma$). In the definition of the $\egamma$ reported in Fig.~\ref{exprule}: (a) $\tau :=\{x_1/x_{o_1} \ldots x_m/x_{o_m}\}$ is a substitution such that $x_1,\ldots,x_m$ are the quantified variables in $\psi$ and $x_{o_1}, \ldots, x_{o_m} \in \varz(\Phi)$; (b) $\seqs \defAs \{ \overline{\beta}_1\tau,\ldots,\overline{\beta}_n\tau\} \setminus \{\overline{\beta}_i\tau\}$ is a set containing the  complements of all the disjuncts $\beta_1, \ldots, \beta_n$ to which the substitution $\tau$ is applied, with the exception of the disjunct $\beta_i$.  
\vspace*{-0.5cm}
\begin{figure}
	{{\footnotesize
			\begin{center}
				\begin{minipage}[h]{5cm}
					$\infer[\textbf{E}^{\mathbf{\gamma}}\textbf{-rule}]
					{\beta_i\tau}{\psi & \quad \seqs}$\\[.1cm]
					{ where\\[-.1cm] $\psi=(\forall{x_1})\ldots(\forall{x_m})(\beta_1 \vee \ldots \vee \beta_n)$,\\
						$\tau :=\{x_1/x_{o_1} \ldots x_m/x_{o_m}\}$,\\
						and $\seqs \defAs \{ \overline{\beta}_1\tau,...,\overline{\beta}_n\tau\} \setminus \{\overline{\beta}_i\tau\}$,}{ for $i=1,...,n$}
				\end{minipage}~~~~~~~~~~~
				\begin{minipage}[h]{3.5cm}
					\vspace{-1.12cm}
					$\infer[\textbf{PB-rule}]
					{A~~|~~\overline{A}}{}$\\[.1cm]
					{ where $A$ is a literal}
				\end{minipage}
			\end{center}
			\vspace{-.2cm}
		}
		\caption{\label{exprule} Expansion rules for the \keg.}
		% \end{center}
		%\normalsize
	}
\end{figure}
\vspace*{-0.5cm}
%
%
%%%% PARTE RIGUARDANTE LA SOSTITUZIONE %%%%%
%Let $\vartheta$ be a branch of a \ke\space $\mathcal{T}$ and $\sigma$ a substitution. We denote with $\vartheta\sigma$ the branch obtained from $\vartheta$ by applying the substitution $\sigma$ to each formula $X$ of $\vartheta$. Let $\mathcal{T}'$ be obtained from $\mathcal{T}$ by replacing each branch $\theta$ of $\mathcal{T}$ with the branch $\vartheta\sigma$, constructed in its turn from $\vartheta$ by applying the substitution $\sigma$ to each formula $X$ of $\vartheta$. Then $\mathcal{T}'$ is a \ke. 
%%%%%%%%%%%%%%%%%%%%%%%%%%%%%%%%%

Let $\mathcal{T}$ be a \keg. A branch $\vartheta$ of $\mathcal{T}$  is \textit{closed} if either it contains both $A$ and $\neg A$, for some formula $A$, or a literal of type $\neg(x = x)$. Otherwise, the branch is \textit{open}. A \keg\space is \emph{closed} if all its branches are closed. 
A formula $\psi=(\forall{x_1})\ldots(\forall{x_m})(\beta_1 \vee \ldots \vee \beta_n)$ is \textit{fulfilled} in a branch $\vartheta$, if  $\vartheta$ contains $\beta_i\tau$ for some $i=1,\ldots,n$ and for all $\tau$ having as domain the set $Q\var_0(\psi)=\{x_1, \ldots, x_m\}$ of the quantified variables occurring in $\psi$, and as range the set $\var_0(\vartheta)$ of the variables of sort 0 occurring free in $\vartheta$. Notice that since the procedure $\consistency$ to be defined next does not introduce any new variable, $\var_0(\vartheta)$ coincides with $\var_0(\phi_{\KB})$, for every branch $\vartheta$. A branch $\vartheta$ is \textit{fulfilled} if every formula $\psi=(\forall{x_1})\ldots(\forall{x_m})(\beta_1 \vee \ldots \vee \beta_n)$ occurring in $\vartheta$ is fulfilled. 
%A \ke\space is \textit{fulfilled} if all its  branches are fulfilled. 
% % %
%A formula $\beta_1 \vee \ldots \vee \beta_n$ is \textit{fulfilled} in a branch $\vartheta$, if $\beta_i$ is in $\vartheta$, for some $i=1,\ldots,n$. A branch $\vartheta$ is \textit{fulfilled} if every formula $\beta_1 \vee \ldots \vee \beta_n$ occurring in $\vartheta$ is fulfilled. 
A \keg\space is \textit{fulfilled} if all its  branches are fulfilled. 
A branch $\vartheta$ is \textit{complete} if either it is closed or it is open, fulfilled, and it does not contain any literal of type $x=y$, with $x$ and $y$ distinct variables. A \keg\space is \textit{complete} (resp., \emph{fulfilled}) if all its  branches are complete (resp., fulfilled or closed).  

 A $\coreflqsr$-interpretation $\M$ \emph{satisfies} a branch $\vartheta$ of a \keg\space (or, equivalently, $\vartheta$ \emph{is satisfied} by $\M$), and we write $\M \models \vartheta$, if $\M \models X$, for every formula $X$ occurring in $\vartheta$. 
 %In this case we also say that $\vartheta$ is \emph{satisfied} by $M$. 
 A $\coreflqsr$-interpretation $\M$ satisfies a  \keg\space $\mathcal{T}$ (or, equivalently, $\mathcal{T}$ \emph{is satisfied} by $\M$), and we write $\M \models \mathcal{T}$, if $\M$ satisfies a branch $\vartheta$ of $\mathcal{T}$. 
 %In this case we also say that $\mathcal{T}$ is satisfied by $M$. 
A branch $\vartheta$ of a \keg\space $\mathcal{T}$ is \emph{satisfiable} if there exists a $\coreflqsr$-interpretation $\M$ that satisfies $\vartheta$. A \keg\space is satisfiable if at least one of its branches is satisfiable. 
 
%We denote with $\minord$ an arbitrary but fixed total order on the variables in $\mathsf{Var}_{0}(\phi_{\KB})$.
%Let $\vartheta$ be a branch of a \keg. We denote with $<_{\vartheta}$ an arbitrary but fixed total order on the variables in $\mathsf{Var}_{0}(\vartheta)$.

The procedure $\consistency$ takes care of literals of type $x=y$ occurring in the branches of $\T_\KB$ by constructing, for each open and fulfilled branch $\vartheta$ of $\T_\KB$ a substitution $\sigma_{\vartheta}$ such that $\vartheta\sigma_{\vartheta}$ does not contain literals of type $x=y$ with distinct $x,y$. Then, for every open and complete branch $\vartheta':=\vartheta\sigma_{\vartheta}$ of $\T_{\KB}$, the procedure  $\prochoplus$ constructs a decision tree $\DT_{\vartheta'}$ such that every maximal branch of $\DT_{\vartheta'}$ induces a substitution $\sigma'$ 
%satisfying  $\M_\vartheta \models \psi_Q\sigma_{\vartheta}\sigma'$. 
such that $\sigma_{\vartheta}\sigma'$ belongs to the answer set of $\psi_{Q}$ with respect to $\phi_{\KB}$. 

Specifically, the decision tree $\DT_{\vartheta'}$ is defined as follows.
Let $d$ be the number of literals in $\psi_Q$. Then $\DT_{\vartheta'}$ is a finite labelled tree of depth $d+1$ whose labelling satisfies the following conditions, for $i=0,\ldots,d$:
\begin{itemize}[itemsep=0.2cm]
\item[(i)]  every node of $\DT_{\vartheta'}$ at level $i$ is labelled with $(\sigma'_i, \psi_Q\sigma_{\vartheta}\sigma'_i)$; in particular, the root is labelled with
$(\sigma'_0, \psi_Q\sigma_{\vartheta}\sigma'_0)$, where $\sigma'_0$ is the empty substitution; 

\item[(ii)] if a node at level $i$ is labelled with $(\sigma'_i, \psi_Q\sigma_{\vartheta}\sigma'_i)$, then its $s$ successors, with $s >0$, are labelled with $\big(\sigma'_i\varrho^{q_{i+1}}_1, \psi_Q\sigma_{\vartheta}(\sigma'_i\varrho^{q_{i+1}}_1)\big),\ldots,\big(\sigma'_i\varrho^{q_{i+1}}_s, \psi_Q\sigma_\vartheta(\sigma'_i\varrho^{q_{i+1}}_s)\big)$, 
where $q_{i+1}$ is the $(i+1)$-st conjunct of $\psi_Q\sigma_{\vartheta}\sigma'_i$ and $\mathcal{S}_{q_{i+1}}=\{\varrho^{q_{i+1}}_1, \ldots, \varrho^{q_{i+1}}_s  \}$ is the collection of the substitutions\\ 
$\varrho = \{x_{v_1}/x_{o_1}, \ldots, x_{v_n}/ x_{o_n},  X^1_{c_1}/X^1_{C_1}, \ldots, X^1_{c_m}/ X^1_{C_m},$\\
\null\hfill $X^3_{r_1}/X^3_{R_1}, \ldots, X^3_{r_k}/X^3_{R_k},  X^3_{p_1}/X^3_{P_1}, \ldots, X^3_{p_h}/X^3_{P_h}\}$,\\ 
{\spaceskip  1.35em  \relax
with $\{x_{v_1}, \ldots, x_{v_n}\} = \varz(q_{i+1})$, $\{X^1_{c_1},\ldots, X^1_{c_m}\} = \varu(q_{i+1})$, and}\\
$\{X^3_{p_1},\ldots, X^3_{p_h},X^3_{r_1},\ldots, X^3_{r_k}\} = \vart(q_{i+1})$, such that $t=q_{i+1}\varrho$, for some  literal $t$ on $\vartheta'$. If $s = 0$, the node labelled with $(\sigma'_i, \psi_Q\sigma_{\vartheta}\sigma'_i)$ is a leaf node and, if $i = d$,  $\sigma_\vartheta\sigma'_i$ is added to $\Sigma'$.  In this case, the leaf node is contained in a non failing-branch and the substitution $\sigma_\vartheta\sigma'_i$ is a match for the query $\psi$.
\end{itemize}
%For each open branch $\vartheta$ of $\T_\KB$, the procedure $\prochoplus$ computes the corresponding $\DT_{\vartheta}$ by constructing a stack of its nodes. 
The decision tree $\DT_{\vartheta'}$ is represented as a stack of its nodes. Initially the stack contains the root node $(\epsilon,\psi_Q\sigma_\vartheta)$ of  $\DT_{\vartheta'}$, as defined by condition (i). Then, iteratively, the following steps are executed. An element $(\sigma', \psi_Q\sigma_\vartheta\sigma')$ is popped out of the stack. If the last literal of the query $\psi_Q$ has not been reached, the successors of the current node are computed according to condition (ii) and inserted in the stack. Otherwise the current node must have the form $(\sigma',\lambda)$, with $\lambda$ the empty query, and the substitution $\sigma_\vartheta\sigma'$  is inserted in $\Sigma'$. Notice that, in case of a failing query match, the $\litqt$ computed at step 13 is empty. Since the \textsf{while}-loop 14--18 is not executed, no successor node is pushed in the stack. Thus, the failing branch is abandoned and the procedure selects another branch by means of a pop of one of its nodes from the stack.

%macro per HOCQA^\plus ....
The procedures $\consistency$ and $\prochoplus$ are shown next. 

%\smallskip

{\scriptsize
\begin{algorithmic}[1]
\Procedure{$\consistency$}{$\phi_\KB$}
\State $\Phi_\KB :=\{\phi: \phi \mbox{ is a conjunct of } \phi_{\KB}\}$;
\State $\T_{\KB}$ := $\Phi_\KB$;
\State $\mathcal{E}:= \emptyset$;
\While{$\T_{\KB}$ is not fulfilled}
\parState{- select a not fulfilled open branch $\vartheta$ of $\T_{\KB}$ and a not fulfilled formula\\ \hspace*{.2cm}$\psi=(\forall{x_1})\ldots(\forall{x_m})(\beta_1 \vee \ldots \vee \beta_n)$ in $\vartheta$;}
%
%
% Definire in HOCQA l'insieme QVAR_0 
%
%
%
\parState{ $\Sigma^{\KB}_\psi = \{\tau : \tau=\{x_1/x_{o_m}, \ldots, x_m/x_{o_m}\}\}$, where $\{x_1,\ldots,x_m\}=Q\var_0(\psi)$ and $\{x_{o_1},\ldots,x_{o_m}\}\in \varz(\phi_{\KB})$;}
\For{$\tau \in \Sigma^{\KB}_\psi$} 
\If{$\beta_i\tau \notin \vartheta$, for every $i=1,\ldots,n$}
\If{$\seqsnj$ is in $\vartheta$, for some $j \in \{1,\ldots,n\}$}
    \State - apply the $\egamma$ to $\psi$ and $\seqsnj$ on $\vartheta$; \label{procCon:Egamma}
\Else 
    \parState{- let $B^{\overline{\beta}\tau}$ be the collection of literals $\overline{\beta}_1\tau,\ldots,\overline{\beta}_n\tau$ present in $\vartheta$
%subset of $\{\overline{\beta}_1\tau,\ldots,\overline{\beta}_n\tau\}$ that  is present in $\vartheta$ 
and let\\ \hspace*{.15cm} $h$ be the lowest index such that $\overline{\beta}_h\tau \notin B^{\overline{\beta}\tau}$;}
    \parState{- apply the PB-rule to $\overline{\beta}_h\tau$ on $\vartheta$;} \label{procCon:pbrule}
\EndIf;
\EndIf;
\EndFor;
\EndWhile;
\For {$\vartheta$ in $\T_{\KB}$}
\If {$\vartheta$ is an open branch}
%\parWhile{$\T_\KB$ has open branches containing literals of type $x = y$, with distinct $x$ and $y$} 
\label{procCon:whileBranch}
%
%\State{- select such an open branch $\vartheta$ of $\T_\KB$;}

\State $\sigma_{\vartheta} := \epsilon$ (where $\epsilon$ is the empty substitution);

\State $\mathsf{Eq}_{\vartheta} := \{ \mbox{literals of type $x = y$, occurring in $\vartheta$}\}$;

\While{$\mathsf{Eq}_{\vartheta}$ contains $x = y$, with distinct $x$, $y$} \label{procCon:whileEQ}

    \State - select a literal $x = y$ in $\mathsf{Eq}_{\vartheta}$, with distinct $x$, $y$;
    
    \State $z :=$ $min_{\minord}(x,y)$ (with $\minord$ an arbitrary but fixed total order on $\mathsf{Var}_{0}(\phi_{\KB})$);
    
    \State $\sigma_{\vartheta} := \sigma_{\vartheta} \cdot \{x/z, y/z\}$;
    
    \State $\mathsf{Eq}_{\vartheta} := \mathsf{Eq}_{\vartheta}\sigma_{\vartheta}$;
\EndWhile;

\State $\mathcal{E} = {\mathcal{E} \cup \{(\vartheta,\sigma_\vartheta)\}}$;
\State $\vartheta := \vartheta\sigma_{\vartheta}$;
%\EndparWhile;
\EndIf;
\EndFor;
\State \Return $(\T_\KB, \mathcal{E})$;
\EndProcedure;
\end{algorithmic}
}
\vspace*{0.3cm}
{\scriptsize
\begin{algorithmic}[1]
	\Procedure{$\prochoplus$}{$\psi_Q$, $\mathcal{E}$}
	
	\State $\Sigma'$ := $\emptyset$;	
	\parWhile{$\mathcal{E}$ is not empty}
	\State - let  $(\vartheta,\sigma_\vartheta) \in \mathcal{E}$;
	\State - $\vartheta := \vartheta\sigma_{\vartheta}$;
	%\If{$\vartheta$ is open}
\State - initialize $\mathcal{S}$ to the empty stack;

\State - push $(\epsilon, \psi_Q\sigma_\vartheta)$ in $\mathcal{S}$;

\While{$\mathcal{S}$ is not empty}
\State - pop $(\sigma', \psi_Q\sigma_\vartheta\sigma')$ from $\mathcal{S}$;

\If{$\psi_Q\sigma_\vartheta\sigma' \neq \lambda$}
\State - let $q$ be the leftmost conjunct of $\psi_Q\sigma_\vartheta\sigma'$;

\State $\psi_Q\sigma_\vartheta\sigma':= \psi_Q\sigma_\vartheta\sigma'$ deprived of $q$;
\State $\litqt := \{ t \in \vartheta : t=q\rho$, for some substitution $\rho \}$; \label{procHOP:compLit}

\While{$\litqt$ is not empty} \label{procHOP:whileLit}
\State - let $t \in \litqt$, $t=q\rho$;

\State $\litqt := \litqt \setminus \{t\}$;

\State - push $(\sigma'\rho, \psi_Q\sigma_\vartheta\sigma'\rho)$ in $\mathcal{S}$;
\EndWhile;
\Else
\State $\Sigma'$ := $\Sigma' \cup \{\sigma_\vartheta\sigma'\}$;
\EndIf;
\EndWhile;
%\EndIf;
   \State $\mathcal{E} := {\mathcal{E} \setminus \{(\vartheta,\sigma_\vartheta)\}}$; \label{procHOP:Edel}
	\EndparWhile;	
	\State \Return $\Sigma'$;
	\EndProcedure;
\end{algorithmic}
}
\normalsize

%
%
% Spiegare il caso di fallimento della ricerca della query. Spiegare anche nella definizione+
% dichiarativa della ricerca della query.
%
%\subsubsection{Correctness, termination, and complexity of the procedure KE-$\shdlssx$}\label{sec:proc1corr}
%In this section we show that the procedure KE-$\shdlssx$ is correct, we briefly outline a proof of its termination, and we provide some complexity results.
%

\subsection{Correctness of the procedures $\consistency$ and $\prochoplus$}
Correctness of the procedure $\consistency$ follows from Theorems \ref{teo:correctnessplus} and \ref{teo:completenessplus}, which show that $\phi_\KB$ is satisfiable if and only if  $\T_{\KB}$ is a non-closed \keg, whereas correctness of the procedure $\prochoplus$ is  proved by Theorem \ref{teo:proc2corrplusBis}, which shows that the output set  $\Sigma'$ is the HO-answer set of $\psi_Q$ w.r.t.\ $\phi_\KB$. %Proofs can be found in \cite{RR2018ext}.

Before stating (and proving) Theorems~\ref{teo:correctnessplus}, \ref{teo:completenessplus}, and \ref{teo:proc2corrplusBis}, we prove the following technical lemmas, which are needed for the proof of Theorem~\ref{teo:correctnessplus}.

\begin{lemma}\label{lemma:invariantplusBis}
\begin{sloppypar}
	Let $\vartheta$ be a branch of $\T_{\KB}$ selected at step $19$ of procedure $\consistency$ ($\phi_{\KB}$), let $\sigma_{\vartheta}$ be the associated substitution constructed during the execution of the \textsf{while}-loop 23--28, and let $\M = (D,M)$ be a $\coreflqsr$-interpretation satisfying $\vartheta$. Then
	\end{sloppypar} 
	\begin{equation}\label{eq:invariantplusBis}
	Mx = Mx\sigma_{\vartheta}, \mbox{ for every } x \in \mathsf{Var}_0(\vartheta),
	\end{equation}
	is an invariant of the \textsf{while}-loop 23--28. 
\end{lemma}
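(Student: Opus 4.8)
The plan is to prove this by induction on the number of iterations of the \textsf{while}-loop 23--28, showing that the equality $Mx = Mx\sigma_{\vartheta}$ holds for every $x \in \mathsf{Var}_0(\vartheta)$ both before the loop is entered and after each iteration, given that $\M \models \vartheta$.

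\medskip

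\textbf{Base case.} Before the loop starts, $\sigma_{\vartheta} = \epsilon$ (the empty substitution, set at step 21), so $Mx\sigma_{\vartheta} = Mx$ trivially for every $x \in \mathsf{Var}_0(\vartheta)$. Note that since the procedure $\consistency$ introduces no new variables, $\mathsf{Var}_0(\vartheta)$ is stable throughout (it coincides with $\mathsf{Var}_0(\phi_{\KB})$), so the quantification in \eqref{eq:invariantplusBis} refers to a fixed finite set.

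\medskip

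\textbf{Inductive step.} Assume the invariant holds at the start of an iteration, i.e.\ $Mx = Mx\sigma_{\vartheta}$ for all $x \in \mathsf{Var}_0(\vartheta)$, and that the loop guard is true, so $\mathsf{Eq}_{\vartheta}$ contains some literal $x = y$ with $x \neq y$. By construction $\mathsf{Eq}_{\vartheta}$ (after the updates carried out so far) is $\{$literals of type $u = v$ occurring in $\vartheta\} \sigma_{\vartheta}$; hence the presence of $x = y$ in $\mathsf{Eq}_{\vartheta}$ means there is a literal $u = v$ in $\vartheta$ with $u\sigma_{\vartheta} = x$, $v\sigma_{\vartheta} = y$. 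Since $\M \models \vartheta$, we have $Mu = Mv$, and by the induction hypothesis $Mu = M(u\sigma_{\vartheta}) = Mx$ and $Mv = M(v\sigma_{\vartheta}) = My$; therefore $Mx = My$. Now the iteration sets $z := \min_{\minord}(x,y)$, so $z \in \{x,y\}$ and consequently $Mz = Mx = My$. The substitution is updated to $\sigma_{\vartheta}' := \sigma_{\vartheta} \cdot \{x/z, y/z\}$. For an arbitrary $w \in \mathsf{Var}_0(\vartheta)$ we must check $Mw = M(w\sigma_{\vartheta}')$. By the induction hypothesis $Mw = M(w\sigma_{\vartheta})$, so it suffices to show $M(w\sigma_{\vartheta}) = M\bigl((w\sigma_{\vartheta})\{x/z,y/z\}\bigr)$. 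Writing $w' := w\sigma_{\vartheta}$, this holds because $\{x/z,y/z\}$ only affects $w'$ when $w' \in \{x,y\}$, and in that case $w'\{x/z,y/z\} = z$ with $Mz = Mx = My = Mw'$; if $w' \notin \{x,y\}$ the substitution leaves it unchanged. Hence $Mw = M(w\sigma_{\vartheta}')$. Finally the iteration updates $\mathsf{Eq}_{\vartheta}$ to $\mathsf{Eq}_{\vartheta}\sigma_{\vartheta}'$ (or, more precisely, applies the new factor of the substitution), which changes neither $\vartheta$ nor $M$, so the invariant is re-established for $\sigma_{\vartheta}'$, completing the inductive step. Since the loop terminates (each iteration strictly decreases the number of distinct pairs of variables linked by equalities in $\mathsf{Eq}_{\vartheta}$), the invariant holds on exit.

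\medskip

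\textbf{Main obstacle.} The only delicate point is bookkeeping the composition of substitutions: one must be careful that $\sigma_{\vartheta} \cdot \{x/z,y/z\}$ is interpreted as ``first apply $\sigma_{\vartheta}$, then apply $\{x/z,y/z\}$'' and that the algorithm's update of $\mathsf{Eq}_{\vartheta}$ at step 27 is consistent with reading $\mathsf{Eq}_{\vartheta}$ as the image under the current $\sigma_{\vartheta}$ of the original set of equality literals of $\vartheta$. Once this invariant on $\mathsf{Eq}_{\vartheta}$ is fixed (it can be folded into the same induction, or stated as a preliminary remark), the semantic argument above goes through routinely using only $\M \models \vartheta$ and the fact that equal terms receive equal values under $M$.
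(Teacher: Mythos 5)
Your proof is correct and follows essentially the same route as the paper's: induction on the loop iterations, tracing each equality $x=y$ in $\mathsf{Eq}_{\vartheta}$ back to a literal $u=v$ on $\vartheta$ satisfied by $\M$ to obtain $Mx=My$, and then checking that the new factor $\{x/z,y/z\}$ preserves $M$-values. The only presentational difference is that you unify the paper's two subcases (whether $x=y$ itself occurs in $\vartheta$ or is the $\sigma_{\vartheta}$-image of some $x'=y'$) via an explicit auxiliary invariant on $\mathsf{Eq}_{\vartheta}$, which you correctly flag as needing to be folded into the induction.
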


\begin{proof}
	We prove the thesis by induction on the number $i$ of iterations of the \textsf{while}-loop 23--28 of the procedure $\prochoplus$($\psi_{Q}$,$\phi_{\KB}$). For simplicity we indicate with $\sigma_{\vartheta}^{(i)}$ and with $Eq_{\sigma_{\vartheta}}^{(i)}$ the substitution $\sigma_{\vartheta}$ and the set $Eq_{\sigma_{\vartheta}}$calculated at iteration $i \geq 0$, respectively. 
	
	If $i = 0$, $\sigma_{\vartheta}^{(0)}$ is the empty substitution $\epsilon$ and thus (\ref{eq:invariantplusBis}) trivially holds. 
	
	Assume by inductive hypothesis that (\ref{eq:invariantplusBis}) holds at iteration $i \geq 0$. We want to prove that (\ref{eq:invariantplusBis}) holds at iteration $i+1$. 
	
	At iteration $i +1$, $\sigma_{\vartheta}^{(i+1)} = \sigma_{\vartheta}^{(i)} \cdot \{x/z,y/z\}$, where $z = \min_{\minord}\{x,y\}$ and $x = y$ is a literal in $Eq_{\sigma_{\vartheta}}^{(i)}$, with distinct $x,y$. We assume, without loss of generality, that $z$ is the variable $x$ (an analogous proof can be carried out assuming that $z$ is the variable $y$). By inductive hypothesis $Mw = Mw\sigma_{\vartheta}^{(i)}$, for every $w \in \mathsf{Var}_0(\vartheta)$. If $w\sigma_{\vartheta}^{(i)} \in \mathsf{Var}_0(\vartheta)\setminus \{y\}$, plainly $w\sigma_{\vartheta}^{(i)}$ and $w\sigma_{\vartheta}^{(i+1)}$ coincide and thus 
	$Mw\sigma_{\vartheta}^{(i)} = Mw\sigma_{\vartheta}^{(i+1)}$. Since $Mw = Mw\sigma_{\vartheta}^{(i)}$, it follows that $Mw = Mw\sigma_{\vartheta}^{(i+1)}$. 
	
	If $w\sigma_{\vartheta}^{(i)}$ coincides with $y$, we reason as follows. At iteration $i+1$, the variables $x,y$ are considered because the literal $x=y$ is selected from $Eq_{\sigma_{\vartheta}}^{(i)}$. 
	
	If $x=y$ is a literal belonging to $\vartheta$, then $Mx = My$. Given that $w\sigma_{\vartheta}^{(i)}$ coincides with $y$ and $w\sigma_{\vartheta}^{(i+1)}$ coincides with $x$,  $My = Mx$ implies  $Mw\sigma_{\vartheta}^{(i)} = Mw\sigma_{\vartheta}^{(i+1)}$. Since by inductive hypothesis $Mw = Mw\sigma_{\vartheta}^{(i)}$, it holds that $Mw = Mw\sigma_{\vartheta}^{(i+1)}$. 
	
	If $x  = y$ is not a literal occurring in $\vartheta$, then $\vartheta$ must contain a literal $x' = y'$ such that $x$ coincides with $x'\sigma_{\vartheta}^{(i)}$ and $y$ coincides with $y'\sigma_{\vartheta}^{(i)}$ at iteration $i$. Since $Mx' = My'$ and, by inductive hypothesis, $Mx' = Mx'\sigma_{\vartheta}^{(i)}$ and $My' = My'\sigma_{\vartheta}^{(i)}$, we have that $Mx = My$, and thus, by reasoning as above,  $Mw = Mw\sigma_{\vartheta}^{(i+1)}$. As (\ref{eq:invariantplusBis}) holds at each iteration of the \textsf{while}-loop, it follows that it is an invariant of the loop, as we wished to prove.\qed
\end{proof}

\begin{lemma}\label{lemma:correctnessplusBis}
	Let $\T_0,\ldots,\T_h$ be a sequence of \keg x such that $\T_0 = \phi_{\KB}$, and $\T_{i+1}$ is obtained from $\T_i$ by applying either the rule of step 11, or the rule of step 14, or the  substitution of step 30 of procedure $\consistency$($\phi_{\KB}$), for $i = 1,\ldots,h-1$. 
	If $\T_i$ is satisfied by a $\coreflqsr$-interpretation $\M$, then $\T_{i+1}$ is satisfied by $\M$ as well, for $i = 1,\ldots,h-1$.
\end{lemma}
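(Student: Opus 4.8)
The plan is to argue by cases on which of the three operations turns $\T_i$ into $\T_{i+1}$: the application of the $\egamma$ at step~$11$, the application of the PB-rule at step~$14$, or the application of the substitution $\sigma_\vartheta$ at step~$30$. All three operations act on a single branch $\vartheta$ of $\T_i$ and leave the remaining branches unchanged. Since $\M \models \T_i$, there is a branch $\vartheta^\star$ of $\T_i$ with $\M \models \vartheta^\star$; if $\vartheta^\star$ is not the modified branch $\vartheta$, then $\vartheta^\star$ occurs verbatim in $\T_{i+1}$ and $\M \models \T_{i+1}$ at once. Hence in each case it suffices to handle the subcase $\vartheta^\star = \vartheta$, that is, to exhibit among the branches of $\T_{i+1}$ replacing $\vartheta$ one that is still satisfied by $\M$.

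For step~$11$, $\T_{i+1}$ is obtained from $\T_i$ by extending $\vartheta$ with $\beta_j\tau$, where $\psi = (\forall x_1)\ldots(\forall x_m)(\beta_1\vee\ldots\vee\beta_n)$ occurs in $\vartheta$, $\tau$ is as at step~$11$, and $j$ is such that $\seqsnj = \{\overline{\beta}_1\tau,\ldots,\overline{\beta}_n\tau\}\setminus\{\overline{\beta}_j\tau\}$ is wholly contained in $\vartheta$. From $\M \models \vartheta$ I obtain $\M \models \psi$, hence, instantiating the quantifiers of $\psi$ by $\tau$ (legitimate since the matrix is quantifier-free), $\M \models \beta_1\tau\vee\ldots\vee\beta_n\tau$; thus $\M \models \beta_k\tau$ for some $k$. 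But $\M \models \overline{\beta}_\ell\tau$ for every $\ell\neq j$, since these literals lie in $\vartheta$, so $\M\not\models\beta_\ell\tau$ for $\ell\neq j$, which forces $k=j$ and hence $\M\models\beta_j\tau$. Therefore $\M$ satisfies the extended branch, i.e.\ $\M\models\T_{i+1}$.

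For step~$14$, $\T_{i+1}$ replaces $\vartheta$ by the two branches obtained by appending to it, respectively, the literal $\overline{\beta}_h\tau$ and its complement; as $\M$ assigns the literal $\overline{\beta}_h\tau$ a definite truth value, it satisfies exactly one of these two branches, whence $\M\models\T_{i+1}$. For step~$30$, $\T_{i+1}$ replaces $\vartheta$ by $\vartheta\sigma_\vartheta$, and here Lemma~\ref{lemma:invariantplusBis} is the tool: it gives $Mx = Mx\sigma_\vartheta$ for every $x\in\mathsf{Var}_0(\vartheta)$. Since $\sigma_\vartheta$ affects only variables of sort~$0$ and, up to the usual disjointness of bound from free variables, is free for every purely universal formula in $\vartheta$, a routine induction on the structure of $\coreflqsr$-formulae gives $\M\models X$ if and only if $\M\models X\sigma_\vartheta$, for every formula $X$ occurring in $\vartheta$; the atomic cases reduce to $Mx = Mx\sigma_\vartheta$ and $M\langle x,y\rangle = M\langle x\sigma_\vartheta,y\sigma_\vartheta\rangle$ (the latter by the Kuratowski encoding of pairs), while the propositional connectives and the universal quantifier are immediate. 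Applying this to all formulae of $\vartheta$ yields $\M\models\vartheta\sigma_\vartheta$, so $\M\models\T_{i+1}$.

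I expect the substitution case to be the main obstacle: Lemma~\ref{lemma:invariantplusBis} only controls the values $Mx$ of the individual variables, and the real work is lifting this to satisfaction of arbitrary $\coreflqsr$-formulae — precisely where freeness of $\sigma_\vartheta$ for the purely universal formulae (so that no quantified variable is captured and satisfaction is preserved through the $\gamma$-formulae) must be invoked. The $\egamma$ case, although the conceptually new ingredient of the system, is dispatched by the short ``exactly one disjunct can survive'' argument above, and the PB-rule case is entirely standard. These three cases exhaust the possibilities and establish the claim (and, iterating, show that satisfiability is preserved all the way along $\T_0,\ldots,\T_h$).
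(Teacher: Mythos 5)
Your proposal is correct and follows essentially the same route as the paper's proof: the same reduction to the single modified branch, the same three-way case split on steps 11, 14, and 30, and the same use of Lemma~\ref{lemma:invariantplusBis} for the substitution case. The only difference is that you spell out the lifting of $Mx = Mx\sigma_\vartheta$ to satisfaction of arbitrary formulae by structural induction, a step the paper leaves implicit.
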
  
\begin{proof}
	Let $\M=(D,M)$ be a $\coreflqsr$-interpretation satisfying $\T_i$. Then $\M$ satisfies a branch $\bar{\vartheta}$ of $\T_i$. In case the branch $\bar{\vartheta}$ is different from the branch selected at step 5, if the $\egamma$ (step 11) or the PB-rule (14) is applied, or if a substitution for handling equalities (step 30) is applied, $\bar{\vartheta}$ belongs to $\T_{i+1}$ and therefore $\T_{i+1}$ is satisfied by $\M$. 
	In case $\bar{\vartheta}$ is the branch selected and modified to obtain $\T_{i+1}$, we have to consider the following two cases. 

\smallskip
\noindent\textbf{The branch $\bar{\vartheta}$ has been selected at step $6$ (and thus it is an open branch not yet fulfilled):}
		Let $\psi=(\forall{x_1})\ldots(\forall{x_m})(\beta_1 \vee \ldots \vee \beta_n)$ be the not fulfilled formula selected in $\bar{\vartheta}$, and $\tau=\{x_1/x_{o_1}, \ldots, x_m/x_{o_m}\}$ the substitution in $\Sigma^{\KB}_\psi$  chosen at step 8. If $\beta_i\tau \in \theta$, for some $i=i, \ldots, n$, then step $8$ proceeds with the next iteration. Otherwise, if step $10$ is executed, the $\egamma$ is applied to the formula $\psi=(\forall{x_1})\ldots(\forall{x_m})(\beta_1 \vee \ldots \vee \beta_n)$ and to the set of formulae $\seqsnj$ on the branch $\bar{\vartheta}$, generating the new branch $\bar{\vartheta'} := \bar{\vartheta} ; \beta_i\tau$. Since $\M \models \bar{\vartheta}$, we plainly have that $\M \models \psi=(\forall{x_1})\ldots(\forall{x_m})(\beta_1 \vee \ldots \vee \beta_n)$ and $\M \models \seqsnj$.
		
		%$Mx{o_1}=u_{o_1}, \ldots, Mx{o_n}=u_{o_n} \in \D$.
		Since $\M \models \bar{\beta}_j\tau$ , we have that $\M[x_1/Mx_{o_1}, \ldots, x_m/Mx_{o_m}]  \models  \bar{\beta}_j$, for $j \in \{1, \ldots, n\}\setminus\{i\}$.  
		
		Considering that $\M \models \psi=(\forall{x_1})\ldots(\forall{x_m})(\beta_1 \vee \ldots \vee \beta_n)$, then
\[\M[x_1/Mx_{o_1}, \ldots, x_m/Mx_{o_m}]  \models  \beta_1 \vee \ldots \vee \beta_n,
\]
so that $\M[x_1/Mx_{o_1}, \ldots, x_m/Mx_{o_m}]  \models  {\beta}_i$, namely, $\M \models \beta_i\tau$, as we wished to prove.  
		
		If step $14$ is performed, the PB-rule is applied on $\bar{\vartheta}$, originating the  branches  (belonging to $\T_{i+1}$) $\bar{\vartheta'} := \bar{\vartheta} ; \overline{\beta}_h$ and $\bar{\vartheta''} := \bar{\vartheta} ; \beta_h$. Since either $\M \models \beta_h$ or  $\M \models \overline{\beta}_h$, then either $\M \models \bar{\vartheta'}$ or $\M \models \bar{\vartheta''}$. Thus $\M$ satisfies $\T_{i+1}$, as we wished to prove. 

\smallskip

\noindent\textbf{The branch $\bar{\vartheta}$ has been selected at step $19$ (and thus it is an open and fulfilled branch not yet complete:} Once step $30$ is executed, the new branch $\bar{\vartheta} \sigma_{\bar{\vartheta}}$ is generated. Since $\M \models \bar{\vartheta}$ and, by Lemma \ref{lemma:invariantplusBis}, $Mx = Mx\sigma_{\bar{\vartheta}}$, for every $x \in \mathsf{Var}_0(\bar{\vartheta})$, then $\M \models \bar{\vartheta} \sigma_{\bar{\vartheta}}$, and therefore $\M$ satisfies $\T_{i+1}$, completing the proof of the lemma.\qed 
\end{proof}

\begin{theorem}\label{teo:correctnessplus}
	If $\phi_{\KB}$ is satisfiable, then $\T_{\KB}$ is not closed.
\end{theorem}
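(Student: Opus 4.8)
The statement is the ``soundness'' half of the correctness of $\consistency$, and essentially all of the real work has already been isolated in Lemmas~\ref{lemma:invariantplusBis} and~\ref{lemma:correctnessplusBis}. So the plan is simply: fix a model of $\phi_{\KB}$, propagate it along the whole construction sequence of the \keg\ $\T_{\KB}$ using Lemma~\ref{lemma:correctnessplusBis}, and then observe that the branch it ends up satisfying cannot be a closed one.

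\textbf{Step 1 (base case).} Assume $\phi_{\KB}$ is satisfiable and fix a $\coreflqsr$-interpretation $\M = (D,M)$ with $\M \models \phi_{\KB}$. Since $\phi_{\KB}$ is in CNF, the formulae placed on the single branch of the initial tree $\T_1 := \Phi_{\KB}$ (step~3 of $\consistency$) are exactly the conjuncts of $\phi_{\KB}$; hence $\M \models X$ for every formula $X$ occurring on that branch, i.e.\ $\M \models \T_1$. (This bookkeeping point — that the initial one-branch tree carries precisely the conjuncts of $\phi_{\KB}$ — is the only spot where a little care is needed.)

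\textbf{Step 2 (propagation).} Let $\T_1,\ldots,\T_t = \T_{\KB}$ be the sequence witnessing that $\T_{\KB}$ is a \keg, as produced by $\consistency(\phi_{\KB})$. For each $i < t$, $\T_{i+1}$ is obtained from $\T_i$ by an application of the $\egamma$ (step~11), of the PB-rule (step~14), or of the equality substitution performed at step~30 of the second loop (the in-place replacement $\vartheta := \vartheta\sigma_\vartheta$, whose associated substitution is the one analysed in Lemma~\ref{lemma:invariantplusBis}). By Lemma~\ref{lemma:correctnessplusBis} each such step preserves satisfaction by $\M$: if $\M \models \T_i$ then $\M \models \T_{i+1}$. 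A straightforward induction on $i$, with base case Step~1 and inductive step Lemma~\ref{lemma:correctnessplusBis}, yields $\M \models \T_{\KB}$; that is, $\M$ satisfies some branch $\vartheta$ of $\T_{\KB}$.

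\textbf{Step 3 (a satisfied branch is open).} Suppose, for contradiction, that $\vartheta$ were closed. Then either $\vartheta$ contains a pair $A$, $\neg A$ for some formula $A$, whence $\M \models A$ and $\M \models \neg A$ — impossible; or $\vartheta$ contains a literal of the form $\neg(x=x)$, whence $\M \models \neg(x=x)$, i.e.\ $Mx \neq Mx$ — again impossible. Hence $\vartheta$ is open, so $\T_{\KB}$ possesses an open branch and is therefore not closed, which is the thesis. The argument contains no genuine obstacle: every delicate point (the treatment of the quantifier rule, and the invariance of the equality substitutions) has been discharged in the two preceding lemmas, and what remains is the routine base case and the trivial observation that no interpretation can satisfy a closing pair or a literal $\neg(x=x)$.

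\qed
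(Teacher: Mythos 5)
Your proposal is correct and follows essentially the same route as the paper: both arguments propagate a model $\M$ of $\phi_{\KB}$ through the construction of $\T_{\KB}$ via Lemma~\ref{lemma:correctnessplusBis} and then observe that a branch satisfied by $\M$ cannot contain a complementary pair $A$, $\neg A$ or a literal $\neg(x=x)$. The only differences are cosmetic — the paper phrases the conclusion as a proof by contradiction while you argue directly, and you spell out the base case and induction slightly more explicitly.
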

\begin{proof}
	Let us assume, for contradiction, that $\T_{\KB}$ is closed. Since $\phi_{\KB}$ is satisfiable, there exists a $\coreflqsr$-interpretation $\M$ satisfying every formula of $\phi_{\KB}$. Thanks to Lemma \ref{lemma:correctnessplusBis}, any \keg\space for $\phi_{\KB}$ obtained by applying either step 11, or step 14, or step 30 of the procedure $\consistency$ is satisfied by $\M$. Thus, $\T_{\KB}$ is satisfied by $\M$ as well. In particular, there exists a branch $\vartheta_c$ of $\T_{\KB}$ satisfied by $\M$. From our initial assumption that $\T_{\KB}$ is closed, it follows that the branch $\vartheta_c$ is closed as well and thus it must contain either both $A$ and $\neg A$, for some formula $A$, or a literal of type $\neg (x = x)$. But $\vartheta_c$ is satisfied by $\M$; hence, either $\M \models A$ and $\M \models \neg A$ or $\M \models \neg (x = x)$, which are clearly impossible. Thus, the \keg\space $\T_{\KB}$ must be not closed, proving the theorem.\qed  
\end{proof}

\begin{theorem}\label{teo:completenessplus}
	If $\T_{\KB}$ is not closed, then  $\phi_{\KB}$ is satisfiable.
\end{theorem}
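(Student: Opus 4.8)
The plan is to prove this directly, by a Hintikka-style model construction: from an open, complete branch of $\T_{\KB}$ we read off a $\coreflqsr$-interpretation and show it is a model of $\phi_{\KB}$. Since $\T_{\KB}$ is not closed it has an open branch, and since $\consistency(\phi_{\KB})$ returns a \emph{complete} \keg\ this branch — call it $\vartheta^*$ — is open, fulfilled, and (after step~30) contains no literal $x=y$ with $x,y$ distinct. I would write $\vartheta^* = \widehat\vartheta\,\sigma_{\vartheta}$, where $\widehat\vartheta$ is the branch just before the equality-substitution $\sigma_{\vartheta}$ is applied; since the initial one-branch tree is the list of conjuncts of $\phi_{\KB}$ and the rules $\egamma$ and the PB-rule only append formulae, $\widehat\vartheta$ contains every conjunct $C$ of $\phi_{\KB}$, so $\vartheta^*$ contains $C\sigma_{\vartheta}$ for every such $C$. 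Also, since the procedure introduces no new variables, $\var_0(\widehat\vartheta) = \var_0(\phi_{\KB})$ and $\var_0(\vartheta^*) = \{\, x\sigma_{\vartheta} : x \in \var_0(\phi_{\KB})\,\}$.

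Next I would define a canonical $\coreflqsr$-interpretation $\M = (D,M)$ from $\vartheta^*$: put $D \defAs \var_0(\vartheta^*)$ (a fixed singleton if $\var_0(\phi_{\KB}) = \emptyset$); for $x \in \var_0(\phi_{\KB})$ put $Mx \defAs x\sigma_{\vartheta} \in D$ (note $\sigma_{\vartheta}$ is idempotent, so $M(x\sigma_{\vartheta}) = Mx$); for $X^1 \in \var_1$ put $MX^1 \defAs \{\, z \in D : (z \in X^1) \in \vartheta^*\,\}$; and for $X^3 \in \var_3$ put $MX^3 \defAs \{\, \{\{z\},\{z,w\}\} : z,w \in D,\ (\langle z,w\rangle \in X^3) \in \vartheta^*\,\}$. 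These choices respect the sorts, since $MX^1 \in \pow(D)$ and $MX^3 \in \pow(\pow(\pow(D)))$, and they are consistent with the Kuratowski reading $M\langle z,w\rangle = \{\{z\},\{z,w\}\}$.

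The heart of the argument is to show $\M \models X$ for every formula $X$ of $\vartheta^*$. For a quantifier-free literal this is a short case analysis: $z=z$ and the positive membership literals hold trivially by construction; for $\neg(z=w) \in \vartheta^*$ one has $z \neq w$ (otherwise $\vartheta^*$ would contain $\neg(z=z)$ and be closed), so $Mz \neq Mw$; for the negated membership literals, openness of $\vartheta^*$ forbids the corresponding positive literal, where in the pair case one additionally invokes injectivity of Kuratowski pairing, $\{\{z'\},\{z',w'\}\} = \{\{z\},\{z,w\}\}$ iff $z'=z$ and $w'=w$, to conclude $M\langle z,w\rangle \notin MX^3$. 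For a universally quantified formula $\psi = (\forall x_1)\ldots(\forall x_m)(\beta_1 \vee \ldots \vee \beta_n)$ in $\vartheta^*$ and any $\vec u = (u_1,\ldots,u_m) \in D^m$, set $\tau_{\vec u} \defAs \{x_1/u_1,\ldots,x_m/u_m\}$; since $\vartheta^*$ is fulfilled there is an $i$ with $\beta_i\tau_{\vec u} \in \vartheta^*$, hence $\M \models \beta_i\tau_{\vec u}$ by the literal case, and since $Mu_j = u_j$ this is the same as $\M[\vec x / \vec u] \models \beta_i$; as $\vec u$ was arbitrary, $\M \models \psi$. Thus $\M \models \vartheta^*$.

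Finally, from $\M \models \vartheta^*$ it follows that $\M \models C\sigma_{\vartheta}$ for every conjunct $C$ of $\phi_{\KB}$. Since $Mx = M(x\sigma_{\vartheta})$ for all $x \in \var_0$ and $\sigma_{\vartheta}$ fixes every variable of sorts $1$ and $3$, a routine induction gives $\M \models \varphi \iff \M \models \varphi\sigma_{\vartheta}$ for every $\coreflqsr$-formula $\varphi$ (using that $\sigma_{\vartheta}$ is free for $\varphi$); applying this to the conjuncts $C$ yields $\M \models C$ for all of them, i.e.\ $\M \models \phi_{\KB}$, so $\phi_{\KB}$ is satisfiable. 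I expect the main obstacle to be the bookkeeping around $\sigma_{\vartheta}$: one must verify that $\sigma_{\vartheta}$ is free for the universally quantified formulae (no capture of a bound variable), that fulfilledness is preserved in passing from $\widehat\vartheta$ to $\vartheta^* = \widehat\vartheta\sigma_{\vartheta}$ (with the quantifier domains now ranging over $\var_0(\vartheta^*) \subseteq \var_0(\widehat\vartheta)$), and that the equality $Mx = M(x\sigma_{\vartheta})$ genuinely holds — the latter being in essence the converse reading of Lemma~\ref{lemma:invariantplusBis}. Everything else is the routine Hintikka-set verification sketched above.
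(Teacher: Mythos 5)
Your proposal is correct and follows essentially the same route as the paper: a Hintikka-style canonical interpretation read off an open, complete branch, with $Mx \defAs x\sigma_{\vartheta}$, memberships taken from the literals on the branch, openness ruling out the negated cases, and fulfilledness handling the universally quantified formulae. The only (inessential) difference is that you verify satisfaction on the post-substitution branch $\vartheta^* = \widehat\vartheta\sigma_{\vartheta}$ and then transfer back to the conjuncts of $\phi_{\KB}$ via a substitution lemma, whereas the paper defines the same model and checks each formula of the pre-substitution branch $\vartheta$ directly, so no transfer step is needed.
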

%(For space reasons, the proofs of Theorems \ref{teo:correctness}, \ref{teo:completeness} and of Lemmas \ref{lemma:invariant}, \ref{lemma:correctness} are not reported here. They can be found in \cite{RR2017ext}.)
\begin{proof}
	Since $\T_{\KB}$ is not closed, there must exist a branch $\vartheta'$ in $\T_{\KB}$ which is open and complete. 
	The branch $\vartheta'$ is obtained during the execution of the procedure $\consistency$ from an open fulfilled branch $\vartheta$ by applying to it the substitution $\sigma_{\vartheta}$ constructed during the execution of the \textsf{while}-loop at step 19 of the procedure. Thus, $\vartheta' = \vartheta\sigma_{\vartheta}$. Since each formula of $\phi_{\KB}$ occurs in $\vartheta$, to prove that $\phi_{\KB}$ is satisfiable, it is enough to show that $\vartheta$ is satisfiable. 
	
	Let us construct a $\coreflqsr$-interpretation $\M_{\vartheta}=(D_{\vartheta},M_{\vartheta})$ satisfying every formula $X$ occurring in $\vartheta$ and thus $\phi_{\KB}$. We put:
	\begin{itemize}
		\item $D_{\vartheta} \defAs \{x\sigma_{\vartheta} : x \in \mathsf{Var}_0(\vartheta) \}$; 
		\item $M_{\vartheta} x \defAs x\sigma_{\vartheta}$, \quad for every $x \in \mathsf{Var}_0(\vartheta)$; 
		\item $M_{\vartheta} X^1 \defAs \{x\sigma_{\vartheta} : x \in X^1 \mbox{ occurs in } \vartheta\}$, \quad for every $X^1 \in \mathsf{Var}_1(\vartheta)$;  
		\item $M_{\vartheta} X^3 \defAs \{\langle x\sigma_{\vartheta}, y\sigma_{\vartheta} \rangle : \langle x, y\rangle \in X^3  \mbox{ occurs in } \vartheta \}$, \quad for every $X^3 \in \mathsf{Var}_3(\vartheta)$. 
	\end{itemize}
	
	%By induction on the structure of the formula, it can be shown that $\M_{\vartheta}$ satisfies each formula in $\vartheta$ (details of the proof can be found in \cite{RR2017ext}) and thus, in particular, satisfies the formulae in $\Phi_{\KB}$. It turns out that $\Phi_{\KB}$ is satisfiable as we wished to prove.
	
	Next we show that $\M_{\vartheta}$ satisfies each formula in $\vartheta$. We shall proceed by structural induction and case distinction. 
	To begin with, we consider the case in which the literal $x = y$ occurs in $\vartheta$. By the very construction of $\sigma_{\vartheta}$, as described in  procedure $\consistency$, $x\sigma_{\vartheta}$ and $y\sigma_{\vartheta}$ have to coincide. Thus, $M_{\vartheta}x = x\sigma_{\vartheta} = y\sigma_{\vartheta} = M_{\vartheta} y$ and then $\M_{\vartheta} \models x=y$.
	
	Next, let us assume that the literal $\neg (z = w)$ occurs in $\vartheta$. If $z\sigma_{\vartheta}$ and $w\sigma_{\vartheta}$ coincide, namely they are the same variable, then the branch $\vartheta' = \vartheta\sigma_{\vartheta}$ must be closed, contradicting our initial hypothesis. Thus. $z\sigma_{\vartheta}$ and $w\sigma_{\vartheta}$ must be distinct variables and therefore $M_{\vartheta} z= z\sigma_{\vartheta} \neq w\sigma_{\vartheta} = M_{\vartheta}w$. It follows that $\M_{\vartheta} \not\models z = w$ and, therefore, $\M_{\vartheta} \models \neg(z = w)$, as we wished to prove. 
	
	If $x \in X^1$ occurs in $\vartheta$, then, by the very definition of $M_{\vartheta}$, we have $x\sigma_{\vartheta} \in M_{\vartheta}X^1$, namely $M_{\vartheta} x \in M_{\vartheta}X^1$. Thus, $\M_{\vartheta} \models x \in X^1$, as desired. 
	
	If $\neg(y \in X^1)$ occurs in $\vartheta$, then $y\sigma_{\vartheta}\notin M_{\vartheta}X^1$. Assume, by way of contradiction, that $y\sigma_{\vartheta}\in M_{\vartheta}X^1$. Then there is a literal $z \in X^1$ in $\vartheta$ such that $z\sigma_{\vartheta}$ and $y\sigma_{\vartheta}$ coincide. In this case the branch $\vartheta'$, obtained from $\vartheta$ by applying the substitution  $\sigma_{\vartheta}$ would be closed, contradicting our initial hypothesis. Thus, we have  $y\sigma_{\vartheta}\notin M_{\vartheta}X^1$, which implies $M_{\vartheta}y \notin M_{\vartheta}X^1$. Hence, $\M_{\vartheta} \not\models y \in X^1$, so that $\M_{\vartheta} \models \neg(y \in X^1)$. 
	
	If $\langle x,y\rangle \in X^3$ occurs in $\vartheta$, then, by the very definition of $M_{\vartheta}$, we have $\langle x\sigma_{\vartheta}, y\sigma_{\vartheta}\rangle \in M_{\vartheta}X^3$, that is, $\langle M_{\vartheta}x, M_{\vartheta}y\rangle \in M_{\vartheta}X^3$, so that $\M_{\vartheta} \models \langle x,y\rangle \in X^3$. 
	
	Next, assume that $\neg(\langle z,w\rangle \in X^3)$ occurs in $\vartheta$, but $\langle z\sigma_{\vartheta},w\sigma_{\vartheta}\rangle \in M_{\vartheta}X^3$. Then a literal $\langle z',w'\rangle \in X^3$ occurs in $\vartheta$ such that $z\sigma_{\vartheta}$ coincides with $z'\sigma_{\vartheta}$ and $w\sigma_{\vartheta}$ coincides with $w'\sigma_{\vartheta}$. But then, the branch $\vartheta' = \vartheta\sigma_{\vartheta}$ would be closed, a contradiction. Thus, we must have $\langle z\sigma_{\vartheta},w\sigma_{\vartheta}\rangle \notin M_{\vartheta}X^3$, that is $\langle M_{\vartheta}z,M_{\vartheta}w\rangle \notin M_{\vartheta}X^3$. Hence, $\M_{\vartheta} \not\models \langle x,y\rangle \in X^3$, yielding $\M_{\vartheta} \models \neg(\langle x,y\rangle \in X^3)$. 
	\begin{sloppypar}
	Finally, let $\psi\defAs (\forall{x_1})\ldots(\forall{x_m})(\beta_1 \vee \ldots \vee \beta_n)$ be a $\coreflqsr$-purely universal quantified formula of level 1 occurring in $\vartheta$. Since $\vartheta$ is fulfilled, then $\psi$ is fulfilled too, so that $\vartheta$ must contain the formula $\beta_i\tau$, for some $i=1, \ldots, n$ and for all $\tau$ in $\Sigma^{\KB}_\psi$.
%	
%having as domain $\{x_1, \ldots, x_m\}$ and as range $\varz(\phi_{\KB})$. 
	Let $\tau=\{x_1/x_{o_1}, \ldots, x_m/x_{o_m}\}$ be any substitution in $\Sigma^{\KB}_\psi$. By inductive hypothesis, we have $\M_\vartheta \models \beta_i\tau$, for some $i \in \{1, \ldots, n\}$. 
	Thus, $\M_\vartheta[x_1/Mx_{o_1}, \ldots, x_m/Mx_{o_m}]  \models  \beta_i$ and, \emph{a fortiori}, $\M_\vartheta [x_1/Mx_{o_1}, \ldots, x_m/Mx_{o_m}]  \models  \beta_1 \vee \ldots \vee \beta_n$. From the generality of $\tau$, it follows that $\M_\vartheta  \models  (\forall x_1) \ldots (\forall x_m) (\beta_1 \vee \ldots \vee \beta_n)$, namely, $\M_{\vartheta}\models\psi$.
	\end{sloppypar}
	In conclusion, we have shown that $\M_{\vartheta}$ satisfies each formula in $\vartheta$ and, in particular, all the formulae in $\phi_{\KB}$, as we wished to prove. \qed
\end{proof}

\begin{lemma}\label{lemma:proc2plusBis}
	Let $\psi_Q \defAs q_1 \wedge \ldots \wedge q_d$ be a HO $\coreflqsr$-conjunctive query, $\Sigma'$ the output of $\prochoplus$($\psi_Q$, $\mathcal{E}$), and $\vartheta'$  an open and complete branch of $\T_\KB$. Then, for any substitution $\sigma'$, we have: 
\[
\sigma' \in \Sigma' \iff \{ q_1 \sigma', \ldots, q_d\sigma' \} \subseteq \vartheta'\,.
\]
\end{lemma}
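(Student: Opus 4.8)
The plan is to reduce the statement to a purely combinatorial analysis of the single decision tree $\DT_{\vartheta'}$, and then to prove, by induction on its level, a node-characterization from which the lemma follows by taking the level equal to $d$. First I would observe that the \textbf{parWhile}-loop of $\prochoplus$ processes the pair $(\vartheta,\sigma_\vartheta)\in\mathcal{E}$ with $\vartheta\sigma_\vartheta=\vartheta'$ exactly once, and that during that iteration the stack $\mathcal{S}$ realises a depth-first visit of $\DT_{\vartheta'}$ exactly as specified by conditions (i)--(ii): the elements $(\sigma',\psi_Q\sigma_\vartheta\sigma')$ ever pushed on $\mathcal{S}$ are precisely the labels of the nodes of $\DT_{\vartheta'}$, and $\sigma_\vartheta\sigma'$ is added to $\Sigma'$ exactly when a node of level $d$ — whose query component is $\lambda$ — is popped (line $21$ of $\prochoplus$). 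Writing the substitution in the form $\sigma'=\sigma_\vartheta\cdot\sigma''$ — which is the only shape an element produced while visiting $\vartheta'$ can have, and, since $\vartheta'=\vartheta\sigma_\vartheta$ contains no set-theoretic variable outside the range of $\sigma_\vartheta$, also the only shape for which $\{q_1\sigma',\dots,q_d\sigma'\}\subseteq\vartheta'$ can possibly hold — it therefore suffices to characterise the level-$d$ nodes of $\DT_{\vartheta'}$.

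The core of the argument is the claim, proved by induction on $i=0,1,\dots,d$: for every substitution $\sigma'_i$ whose domain is contained in the set of query variables occurring in $q_1,\dots,q_i$, the tree $\DT_{\vartheta'}$ contains a node of level $i$ labelled $(\sigma'_i,\psi_Q\sigma_\vartheta\sigma'_i)$ if and only if $q_j\sigma_\vartheta\sigma'_i\in\vartheta'$ for every $j\in\{1,\dots,i\}$. The base case $i=0$ is immediate, since the only node of level $0$ is the root $(\epsilon,\psi_Q\sigma_\vartheta)$ and the right-hand condition is vacuous.

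The key fact needed for the inductive step — and the point I expect to be the main obstacle — is a \emph{stability} property: if $q_j\sigma_\vartheta\sigma'_i\in\vartheta'$ for $j\le i$ and $\varrho$ is a substitution whose domain is contained in the free query variables of $q_{i+1}\sigma_\vartheta\sigma'_i$, then $q_j\sigma_\vartheta\sigma'_i\varrho=q_j\sigma_\vartheta\sigma'_i$ for every $j\le i$, so that matching a fresh conjunct never disturbs the conjuncts already matched. This holds because the query variables are disjoint from the set-theoretic variables of $\phi_{\KB}$, the query variables of $q_j$ are all eliminated by $\sigma'_i$ no later than ``step $j$'', and the literal $q_j\sigma_\vartheta\sigma'_i$, lying on $\vartheta'$, contains only variables coming from $\phi_{\KB}$ (recall that $\consistency$ introduces no new variables and that a complete branch contains no literal $x=y$ with $x,y$ distinct), hence variables outside the domain of $\varrho$. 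Granting stability, the step is routine. For ``only if'', a level-$(i+1)$ node is, by (ii), a successor $(\sigma'_i\varrho,\psi_Q\sigma_\vartheta\sigma'_i\varrho)$ of a level-$i$ node $(\sigma'_i,\psi_Q\sigma_\vartheta\sigma'_i)$ with $\varrho$ such that $q_{i+1}\sigma_\vartheta\sigma'_i\varrho=t$ for some literal $t$ on $\vartheta'$; the inductive hypothesis gives $q_j\sigma_\vartheta\sigma'_i\in\vartheta'$ for $j\le i$, stability upgrades this to $q_j\sigma_\vartheta(\sigma'_i\varrho)\in\vartheta'$, and the new conjunct is $q_{i+1}\sigma_\vartheta(\sigma'_i\varrho)=t\in\vartheta'$. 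For ``if'', given $\sigma'_{i+1}$ with $q_j\sigma_\vartheta\sigma'_{i+1}\in\vartheta'$ for all $j\le i+1$, I would split $\sigma'_{i+1}=\sigma'_i\varrho$ with $\sigma'_i$ the restriction of $\sigma'_{i+1}$ to the query variables of $q_1,\dots,q_i$ and $\varrho$ collecting the remaining bindings; stability yields $q_j\sigma_\vartheta\sigma'_i=q_j\sigma_\vartheta\sigma'_{i+1}\in\vartheta'$ for $j\le i$, so by the inductive hypothesis $\DT_{\vartheta'}$ has a level-$i$ node $(\sigma'_i,\psi_Q\sigma_\vartheta\sigma'_i)$, whose $(i+1)$-st conjunct is $q_{i+1}\sigma_\vartheta\sigma'_i$; since $\varrho$ maps it to $q_{i+1}\sigma_\vartheta\sigma'_{i+1}\in\vartheta'$, the substitution $\varrho$ belongs to the matching set $\mathcal{S}_{q_{i+1}\sigma_\vartheta\sigma'_i}$ of condition (ii), whence $(\sigma'_i\varrho,\dots)=(\sigma'_{i+1},\dots)$ is a level-$(i+1)$ node.

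Instantiating the claim at $i=d$ finishes the proof: a level-$d$ node is labelled $(\sigma'_d,\lambda)$ and contributes $\sigma_\vartheta\sigma'_d$ to $\Sigma'$, so (by the reduction of the first paragraph) $\sigma_\vartheta\sigma'_d\in\Sigma'$ if and only if $q_j\sigma_\vartheta\sigma'_d\in\vartheta'$ for all $j$, which, setting $\sigma'=\sigma_\vartheta\sigma'_d$, is exactly $\{q_1\sigma',\dots,q_d\sigma'\}\subseteq\vartheta'$. Apart from the stability property, the only remaining care is the bookkeeping about the domains of the intermediate substitutions $\sigma'_i$ (each bound only on query variables of $q_1,\dots,q_i$, and valued in the ``constants'' $x_o$, $X^1_C$, $X^3_R$, $X^3_P$ coming from $\KB$) and the identification, done once and for all, of the stack-driven computation of $\prochoplus$ with the inductively defined tree $\DT_{\vartheta'}$.
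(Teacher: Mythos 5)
Your proof is correct and follows the same underlying idea as the paper's: walk the decision tree $\DT_{\vartheta'}$ level by level and exploit the fact that a conjunct, once matched against a (ground) literal of $\vartheta'$, is unaffected by the substitutions introduced to match later conjuncts. The differences are organizational but worth noting. The paper proves the two directions separately: for necessity it descends an existing branch and invokes groundness of $q_{i+1}\sigma_\vartheta\rho^{(1)}\cdots\rho^{(i)}$ to identify it with $q_{i+1}\sigma'$; for sufficiency it exhibits a specific branch by putting $\rho^{(1)}=\sigma'_d$ and $\rho^{(2)}=\cdots=\rho^{(d)}=\epsilon$, i.e.\ it front-loads the entire substitution into the first matcher. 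You instead prove a single iff-characterization of the level-$i$ nodes by one induction, splitting $\sigma'_d$ incrementally so that each $\varrho$ binds only the variables of the conjunct being matched. Your version is slightly more faithful to condition (ii) of the definition of $\DT_{\vartheta'}$ (and to step 13 of the procedure), where the matcher's domain is the set of variables of the current conjunct: under that reading the paper's choice $\rho^{(1)}=\sigma'_d$ is only legitimate when all query variables already occur in $q_1$, whereas your restriction-plus-remainder decomposition works in general. Your explicit isolation of the ``stability'' property --- justified by the disjointness of query variables from the variables of $\phi_{\KB}$ and by the absence of fresh variables on a complete branch --- makes rigorous what the paper compresses into the single remark that the matched literal is ground. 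Both arguments buy the same conclusion; yours costs a little more bookkeeping on the domains of the $\sigma'_i$ but removes the implicit assumption in the paper's constructed branch.
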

\begin{proof} For the necessity part, assume that $\sigma' \in \Sigma'$. Then $\sigma'=\sigma_\vartheta\sigma'_d$ and the decision tree $\DT_{\vartheta'}$ contains a branch $\eta$ of length $d+1$ having $(\sigma'_d, \lambda)$ as leaf. Specifically, the branch $\eta$ consists of the following nodes:\\
\centerline{	
	$(\epsilon, q_1\sigma_\vartheta \wedge \ldots \wedge q_d\sigma_\vartheta )$, $( \rho^{(1)}, q_2\sigma_\vartheta\rho^{(1)} \wedge \ldots \wedge q_d\sigma_\vartheta\rho^{(1)} )$, $\ldots$, $( \rho^{(1)} \cdots \rho^{(d)}, \lambda)$,}
so that $\sigma'= \sigma_\vartheta \rho^{(1)} \cdots \rho^{(d)}$. Consider the node \\
\centerline{	
	$ (\rho^{(1)} \cdots \rho^{(i+1)}, q_{i+2}\sigma_\vartheta \rho^{(1)} \cdots \rho^{(i+1)} \wedge \ldots \wedge q_d\sigma_\vartheta \rho^{(1)} \cdots \rho^{(i+1)})$ 
}
obtained from the father node \\
\centerline{	
	$(\rho^{(1)} \cdots \rho^{(i)}, q_{i+1}\sigma_\vartheta \rho^{(1)} \cdots \rho^{(i)} \wedge \ldots \wedge q_d\sigma_\vartheta \rho^{(1)} \cdots \rho^{(i)}),$ 
}	
as $q_{i+1}\sigma_\vartheta \rho^{(1)} \cdots \rho^{(i)} = t$, for some $t \in \vartheta'$. The literal $q_{i+1}\sigma_\vartheta \rho^{(1)} \cdots \rho^{(i)}$ is ground,  therefore it coincides with $q_{i+1}\sigma'$. Thus, $q_{i+1} \sigma'=t$, and hence $q_{i+1}\sigma' \in \vartheta'$. By induction on $i=0, \ldots, d-1$, it therefore follows $\{ q_1\sigma', \ldots, q_d \sigma' \} \subseteq \vartheta'$, as we wished to prove. 
	
	%  We prove that $\{ q_1 \sigma', \ldots, q_d\sigma' \} \subseteq \vartheta' \implies \sigma' \in \Sigma'$  by induction on the depth of the decision tree. 
	% As initial step, the root of the decision tree constructed by the procedure $\prochoplus$ is the node $(\epsilon, q_1\sigma_\vartheta \wedge \ldots \wedge q_d\sigma_\vartheta )$. Trivially, $\sigma' \in \Sigma'$. Assume by inductive hypothesis that $q_{i}\sigma' \in \vartheta \implies \sigma' \in \Sigma'$. We prove that  $q_{i+1}\sigma' \in \vartheta \implies \sigma' \in \Sigma'$ at step $i+1$. At step $i+1$, the procedure builds the node  $ (\rho^{(1)} \cdots \rho^{(i+1)}, q_{i+2}\sigma_\vartheta \rho^{(1)} \cdots \rho^{(i+1)} \wedge \ldots \wedge q_d\sigma_\vartheta \rho^{(1)} \cdots \rho^{(i+1)})$, from the node  $(\rho^{(1)} \cdots \rho^{(i)}, q_{i+1}\sigma_\vartheta \rho^{(1)} \cdots \rho^{(i)} \wedge \ldots \wedge q_d\sigma_\vartheta \rho^{(1)} \cdots \rho^{(i)})$. Assume by contradiction  that $q_{i+1}\sigma' \in \vartheta \implies \sigma' \notin \Sigma'$. Since $\sigma' \notin \Sigma'$ and  by definition of $\sigma'$, thus $q_{i+1}\sigma_{\vartheta}\rho^{(1)} \cdots \rho^{(i+1)} = t \notin \vartheta\sigma_\vartheta$. Then $q_{i+1}\rho^{(1)} \cdots \rho^{(i+1)} = t \notin \vartheta$, thus contraddicting the initial hypothesis. Hence, $q_{i+1}\sigma' \in \vartheta \implies \sigma' \in \Sigma'$. Given the generality of $i=0,\ldots,d-1$, the thesis holds.$\qed$

For the sufficiency part, we have to show that the decision tree $\DT_{\vartheta'}$ constructed by procedure $\prochoplus$($\psi_{Q}$, $\mathcal{E}$) has a branch $\eta$ of length $d+1$ having as leaf a node $(\sigma'_d,\lambda)$ such that $\sigma'=\sigma_\vartheta\sigma'_d$, where  $\sigma_{\vartheta}$ is the substitution such that $\vartheta'=\vartheta\sigma_{\vartheta}$, computed by procedure $\consistency$. Let $(\epsilon, q_1\sigma_\vartheta \wedge \ldots \wedge q_d\sigma_\vartheta )$ be the root of the decision tree $\DT_{\vartheta'}$. At step $33$ of procedure $\consistency$, the node $(\epsilon, q_1\sigma_\vartheta \wedge \ldots \wedge q_d\sigma_\vartheta )$ is popped out from the stack and the conjunct $q=q_1\sigma_\vartheta$ is selected. Then, all the elements of the set $Lit^{\vartheta'}_q$ are considered, namely all the literals $t$  in $\vartheta'$ such that $t=q_1\sigma_{\vartheta}\rho$, for some substitution $\rho$. Among them, we have also the literal $q_1\sigma'=q_1\sigma_\vartheta\sigma'_d$. Let us put $\rho^{(1)}=\sigma'_d$. At step 17, the procedure $\consistency$ pushes the node $(\sigma'_d, q_2\sigma_\vartheta\sigma'_d \wedge \ldots \wedge q_d\sigma_\vartheta\sigma'_d)$ in the stack. Then, also the conjuncts $q_2\sigma_\vartheta\sigma'_d, \ldots, q_d\sigma_\vartheta\sigma'_d$ are processed sequentially. Since each of them coincides with a literal on $\vartheta'$, we have $\rho^{(2)}= \ldots =\rho^{(d)} = \epsilon$.  Considering that $\sigma'_d\epsilon = \sigma'_d$, it follows that the procedure $\consistency$ builds 
	%in the path $\eta$ of $\DT_{\vartheta'}$ 
	the following sequence of nodes 
\begin{gather*}
(\epsilon, q_1\sigma_\vartheta \wedge \ldots \wedge q_d\sigma_\vartheta )\\
(\sigma'_d, q_2\sigma_\vartheta\sigma'_d \wedge \ldots \wedge q_d\sigma_\vartheta\sigma'_d)\\
\vdots\\
(\sigma'_d, q_{d}\sigma_\vartheta\sigma'_d)\\
(\sigma'_d, \lambda)
\end{gather*}
%	$$(\epsilon, q_1\sigma_\vartheta \wedge \ldots \wedge q_d\sigma_\vartheta )$$
%	$$(\sigma'_d, q_2\sigma_\vartheta\sigma'_d \wedge \ldots \wedge q_d\sigma_\vartheta\sigma'_d)$$
%	$$\vdots$$
%	$$(\sigma'_d, q_{d}\sigma_\vartheta\sigma'_d)$$
%	$$(\sigma'_d, \lambda)$$
%	
forming a branch $\eta$ of length $d+1$ of $\DT_{\vartheta'}$. Since $\eta$ has as the node $(\sigma'_d, \lambda)$ as leaf, we have $\sigma_\vartheta\sigma'_d=\sigma' \in \Sigma'$, as we wished to prove. \qed

\end{proof}

\begin{theorem}\label{teo:proc2corrplusBis}
	%Correctnes of Procedure \ref{proced2}.
	Let $\Sigma'$ be the set of substitutions returned by the call to procedure $\prochoplus(\psi_Q$, $\mathcal{E}$). Then $\Sigma'$ is the HO-answer set of $\psi_Q$ w.r.t.\ $\phi_\KB$.
\end{theorem}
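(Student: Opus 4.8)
\textbf{Proof plan for Theorem~\ref{teo:proc2corrplusBis}.}

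The plan is to combine the characterization of $\Sigma'$ provided by Lemma~\ref{lemma:proc2plusBis} with the correctness of the consistency procedure (Theorems~\ref{teo:correctnessplus} and~\ref{teo:completenessplus}) and the semantics of the model $\M_\vartheta$ constructed in the proof of Theorem~\ref{teo:completenessplus}. Concretely, I must show that a substitution $\sigma'$ belongs to $\Sigma'$ if and only if $\sigma'$ is a solution for $\psi_Q$ w.r.t.\ $\phi_\KB$, i.e.\ there is a $\coreflqsr$-interpretation $\M$ with $\M \models \phi_\KB \wedge \psi_Q\sigma'$.

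First I would handle soundness ($\subseteq$). Suppose $\sigma' \in \Sigma'$. By the construction in procedure $\prochoplus$, $\sigma'$ arises from some open and complete branch $\vartheta'$ of $\T_\KB$, so $\sigma' = \sigma_\vartheta \sigma'_d$ for the substitution $\sigma_\vartheta$ attached to $\vartheta$ in $\mathcal{E}$. By Lemma~\ref{lemma:proc2plusBis}, $\{q_1\sigma', \ldots, q_d\sigma'\} \subseteq \vartheta'$. Since $\T_\KB$ is not closed (it has an open complete branch), Theorem~\ref{teo:completenessplus} applies and yields the interpretation $\M_\vartheta$ satisfying every formula occurring in $\vartheta$, hence in particular satisfying $\phi_\KB$; moreover, as shown there, $\M_\vartheta$ also satisfies every formula of $\vartheta' = \vartheta\sigma_\vartheta$. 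In particular $\M_\vartheta \models q_j\sigma'$ for each $j = 1,\ldots,d$, so $\M_\vartheta \models \psi_Q\sigma'$, and therefore $\M_\vartheta \models \phi_\KB \wedge \psi_Q\sigma'$. Thus $\sigma'$ is in the HO-answer set.

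Next, completeness ($\supseteq$). Let $\sigma'$ be in the HO-answer set of $\psi_Q$ w.r.t.\ $\phi_\KB$, witnessed by some $\coreflqsr$-interpretation $\M$ with $\M \models \phi_\KB \wedge \psi_Q\sigma'$. Since $\M \models \phi_\KB$, $\phi_\KB$ is satisfiable, so by Theorem~\ref{teo:correctnessplus} $\T_\KB$ is not closed, and moreover $\M$ (or an interpretation obtained from it via Lemma~\ref{lemma:correctnessplusBis}) satisfies some branch of $\T_\KB$; by the construction of $\consistency$ this branch can be taken open, fulfilled, and complete, say $\vartheta' = \vartheta\sigma_\vartheta$ with $\M \models \vartheta'$. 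I then need to argue that each ground literal $q_j\sigma'$ actually occurs syntactically in $\vartheta'$: this is the crux, and it should follow because $\vartheta'$ is complete and contains no literals $x=y$ with distinct variables, so any quantifier-free literal satisfied by $\M$ that is an instance of a conjunct of $\psi_Q$ must be literally present — this is where I expect to lean on the fact that $\psi_Q = \theta(Q)$ consists of the specific literal shapes of Table~\ref{tablecore} and that all relevant ground atoms over $\var_0(\phi_\KB)$ that hold in $\M$ appear on a complete branch. (If the clean statement of that closure property is not already isolated, establishing it carefully is the main obstacle, and I would phrase it as: for a complete open branch $\vartheta'$ and the associated model $\M_{\vartheta'}$, $\M_{\vartheta'} \models L$ iff $L \in \vartheta'$, for every ground $\coreflqsr$-literal $L$ over $\var_0(\phi_\KB)$.) Granting this, $\{q_1\sigma',\ldots,q_d\sigma'\} \subseteq \vartheta'$, and by Lemma~\ref{lemma:proc2plusBis} we conclude $\sigma' \in \Sigma'$.

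Finally I would note the bookkeeping point that $\prochoplus$ iterates over \emph{all} pairs $(\vartheta,\sigma_\vartheta) \in \mathcal{E}$, i.e.\ over all open complete branches, so no solution is lost by having fixed one branch in the completeness argument; and that each $\sigma'$ placed in $\Sigma'$ involves exactly the variables of $\psi_Q$, so membership in $\Sigma'$ matches the definition of the HO-answer set. Combining the two inclusions gives that $\Sigma'$ is exactly the HO-answer set of $\psi_Q$ w.r.t.\ $\phi_\KB$, which is the claim. The main difficulty, as indicated, is the syntactic-occurrence lemma linking satisfaction by the canonical model of a complete branch to literal membership in that branch; everything else is assembly of already-proved facts.
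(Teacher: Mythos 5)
Your proposal follows essentially the same route as the paper: both directions reduce to Lemma~\ref{lemma:proc2plusBis} plus the canonical interpretation $\M_\vartheta$ built in Theorem~\ref{teo:completenessplus}, with the soundness inclusion argued identically. The only difference is cosmetic: the paper proves the completeness inclusion by contradiction, and the ``syntactic-occurrence'' property you flag as the crux (that satisfaction of a ground literal by the canonical model of a complete open branch forces its literal presence on that branch, for the relevant class of models) is exactly the step the paper itself invokes without isolating it as a separate lemma, in the assertions that $q_i\sigma'\notin\vartheta\sigma_\vartheta$ implies $\M_\vartheta\not\models q_i\sigma'$ and that satisfiability of $\phi_\KB\wedge\psi_Q\sigma'$ can be tested by restricting to the interpretations $\M_\vartheta$.
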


\begin{proof}
It is enough to show that the following two assertions hold:
	\begin{enumerate}[label=(\alph*)]
		\item\label{one} if $\sigma' \in \Sigma'$, then $\sigma'$ is an element of the HO-answer set of $\psi_Q$ w.r.t.\ $\phi_\KB$;
		\item\label{two} if $\sigma'$ is a substitution of the HO-answer set of $\psi_Q$ w.r.t.\ $\phi_\KB$, then $\sigma' \in \Sigma'$.
	\end{enumerate}
	We first prove assertion \ref{one}. Let $\sigma' \in \Sigma'$, and let $\vartheta'=\vartheta\sigma_\vartheta$ be an open and complete branch of $\T_\KB$ such that $\DT_{\vartheta'}$ contains a branch $\eta$ of $d+1$ nodes whose leaf is labelled $(\sigma'_d, \lambda)$, where $\sigma' = \sigma_{\vartheta}\sigma'_d$. By Lemma \ref{lemma:proc2plusBis}, we have $\{  q_1\sigma', \ldots, q_d\sigma' \} \subseteq \vartheta'$.

	Then $\M_\vartheta \models q_{i}\sigma'$, for $i = 1,\ldots,d$, where $\M_\vartheta$ is $\coreflqsr$-interpretation associated with $\vartheta$, satisfying every formula $X$ occurring in $\vartheta$, and constructed as shown in Theorem~\ref{teo:completenessplus}. Hence, $\M_\vartheta \models \psi_Q\sigma'$, and since $\M_\vartheta \models \phi_\KB$,  we plainly have $\M_\vartheta \models \phi_\KB \wedge \psi_Q\sigma'$.  Thus, $\sigma'$ is a substitution of the HO-answer set of $\psi_Q$ w.r.t.\ $\phi_\KB$, proving \ref{one}.

Next we prove that also assertion \ref{two} holds. Let $\sigma'$ be a substitution belonging to the HO-answer set of $\psi_Q$ w.r.t.\ $\phi_\KB$. Hence, there exists a $\coreflqsr$-interpretation $\M $ such that $\M \models \phi_\KB \wedge \psi_Q\sigma'$.
	Assume for contradiction that $\sigma' \notin \Sigma'$. Then, by Lemma \ref{lemma:proc2plusBis},  $\{q_1\sigma', \ldots, q_d\sigma' \} \not\subseteq\vartheta'$, for every open and complete branch $\vartheta'$ of $\T_\KB$. In particular, for any given open and complete branch $\vartheta'$ of $\T_\KB$, there exists an index $i \in \{1,\ldots,d\}$ such that $q_i\sigma' \notin \vartheta'$, i.e.,  $q_i\sigma'\notin \vartheta\sigma_{\vartheta}$, and thus $\M_\vartheta \not\models q_i\sigma'$, with $\M_\vartheta$ an $\coreflqsr$-interpretation associated to $\vartheta$, defined as illustrated in Theorem \ref{teo:completenessplus}.
	% Let $\vartheta'$ be an open and complete branch of $\T_\KB$. Then there is an $i \in \{1, \ldots, d\}$ such that $q_i\sigma \notin \vartheta'=\vartheta\sigma_\vartheta$, and thus, by  Theorem \ref{teo:correctness},  $\M_\vartheta \not\models q_i\sigma$.
Therefore, by the generality of $\vartheta'=\vartheta\sigma_\vartheta$, it follows that every $\M_\vartheta$ satisfying  $\T_\KB$  (as shown in  Theorem \ref{teo:completenessplus}), and thus $\phi_\KB$, does not satisfy $\psi_Q\sigma'$. Since  we can prove the satisfiability of  $ \phi_\KB \wedge \psi_Q\sigma'$ by restricting our interest to the interpretations $\M_\vartheta$ associated to the branches  $\vartheta$ of the tableau $\T_\KB$ and defined as in the proof of Theorem \ref{teo:completenessplus}, it turns out that $\sigma'$ is not a substitution belonging to the HO-answer set of $\psi_Q$ w.r.t.\ $\phi_\KB$, which is a contradiction. Thus, assertion \ref{two} must hold.
	
	Having proved assertions \ref{one} and \ref{two}, we can conclude that $\Sigma'$ and the answer set of $\psi_Q$ w.r.t.\ $\phi_\KB$ coincide, proving the theorem.\qed
\end{proof}

\subsection{Termination of the procedures $\consistency$ and $\prochoplus$}
Termination of the procedure $\consistency$ is based on the fact that the \textsf{while}-loops 5--18 and 19--32 terminate. In addition, the procedure $\prochoplus$ terminates, provided that the \textsf{while}-loop 8--22 terminates.
%8-17

Concerning termination of the \textsf{while}-loop 5--18 of $\consistency$, our proof is grounded on the following facts. The loop selects iteratively a not fulfilled  branch $\vartheta$ and a $\coreflqsr$-purely universal quantified formula of level 1  $\psi=(\forall{x_1})\ldots(\forall{x_m})(\beta_1 \vee \ldots \vee \beta_n)$ occurring in it. Since the sets $Q\var_0(\psi)$ and $\varz({\phi_{\KB}})$ are finite, line 7 builds a finite set $\Sigma^{\KB}_{\psi}$ containing finite substitutions $\tau$. The internal \textsf{for}-loop 8--17 selects iteratively an element $\tau$ in $\Sigma^{\KB}_{\psi}$. The $\egamma$ and PB-rule are applied only if $\beta_i\tau \notin \vartheta$, for all $i=1,\ldots,n$. In particular, if the $\egamma$ is applied on $\vartheta$, the procedure adds $\beta_i\tau$ in $\vartheta$, for some $i=1,\ldots,n$.
 In case the PB-rule is applied on $\vartheta$, two branches are generated. On one branch the procedure adds $\beta_i\tau$, for some $i=1,\ldots,n$, whereas on the other one it adds  $\bar{\beta}_i\tau$, so that the set $B^{\overline{\beta}\tau}$ gains $\overline{\beta}_i\tau$ as a new element. After at most $n-1$ applications of the PB-rule, $\big|B^{\overline{\beta}\tau}\big|$ gets equal to $n-1$ and the $\egamma$ is applied. Since the set $\Sigma^{\KB}_{\psi}$ is finite, the \textsf{for}-loop 8--17 terminates after a finite number of steps. After the last iteration of the \textsf{for}-loop, $\vartheta$ contains $\beta_i\tau$, for some $i=1,\ldots,n$ and for all $\tau$, thus $\psi$ gets fulfilled. Since $\phi_{\KB}$ contains a finite number of formulae $\psi$, the \textsf{while}-loop 5--18 terminates in a finite number of steps, as we wished to prove.

Termination proofs for the \textsf{while}-loop 19--32 of  $\consistency$ and of the \textsf{while}-loop 8--22 of  $\prochoplus$ are analogous to the one of the \textsf{while}-loop 14--44 of $\procho$ in \cite{RR2017}.

\subsection{Complexity issues}
Next, we provide some complexity results.

Concerning the procedure $\consistency$, we reason as follows. Let $\psi$ be any $\coreflqsr$-purely universal quantified formula of level 1 in $\Phi_{\KB}$ (see line 2 of the procedure $\consistency$ for the definition of $\Phi_{\KB}$). Let $r$ be the maximum number of universal quantifiers in $\psi$, $\ell$ the maximum number of literals in $\psi$, and $k \defAs |\varz(\Phi_{\KB})|$. It easily follows that $|\Sigma^{\KB}_{\psi}|=k^r $. Since the maximum number of literals contained in $\psi$ is $\ell$, the procedure applies $\ell-1$ times the PB-Rule and one time the $\egamma$ to $\psi\tau_j$, for $j=1,\ldots,k^r$. Thus $\psi$ generates a \keg\space of height $\mathcal{O}(\ell k^r)$.
Assuming that $m$ is the number of $\coreflqsr$-purely universal quantified formulae of level 1 in $\Phi_{\KB}$, the maximum height of the \keg\space (which corresponds to the maximum size of the models of $\phi_{\KB}$ that are constructed as illustrated in Theorem \ref{teo:completenessplus}) is $\mathcal{O}( m \ell k^r)$ and the maximum number of leaves of the \keg, i.e., the maximum number of such models of $\phi_{\KB}$ is $\mathcal{O}(2^{m \ell    k^r})$.   Notice that the construction of $\mathsf{Eq}_{\vartheta}$ and of $\sigma_{\vartheta}$ in the lines 19--32 of procedure $\consistency$ takes $\mathcal{O}( m \ell k^r)$-time, for each branch $\vartheta$.

Let $\eta(\T_{\KB})$ and $\lambda(\T_{\KB})$ be, respectively, the height of $\T_{\KB}$ and the number of leaves of $\T_{\KB}$ computed by the procedure $\consistency$. Plainly, $\eta(\T_{\KB}) = \mathcal{O}(\ell  m  k^r)$  and $\lambda(\T_{\KB})= \mathcal{O}(2^{\ell  m  k^r})$, as computed above.

It is easy to verify that $s=\mathcal{O}(\ell m  k^r)$ is the maximum branching of $\DT_\vartheta$. %Such upper bound is motivated by the fact that literals of type $\langle x_{v}, x_{u} \rangle \in X^3_r$ are admitted in $\psi_Q$. When a literal of type $\langle x_{v}, x_{u} \rangle \in X^3_r$ is processed and put in a node $n$ of the decision tree $\DT_\vartheta$, the node $n$ has a number of successors equal to the number of literals of type $\langle x_{a}, x_{b} \rangle \in X^3_R$ occurring on the branch $\vartheta$.
 Since the height of $\DT_\vartheta$ is $h$, where $h$ is the number of literals in $\psi_Q$, and the successors of a node are computed in $\mathcal{O}(\ell m k^r)$ time, it follows that the number of leaves in $\DT_\vartheta$ is $\mathcal{O}(s^{h})=\mathcal{O}((\ell m  k^r)^{h})$ and that they can be computed  in $\mathcal{O}( s^{h} \cdot  \ell m k^r \cdot h) = \mathcal{O}(h \cdot (\ell m k^r)^{(h+1)})$-time. Finally, since we have $\lambda(\T_{\KB})$ of such decision trees, the answer set of $\psi_{Q}$ w.r.t.\\ $\phi_\KB$ can be computed in time
$\mathcal{O}(h \cdot (\ell m k^r)^{(h+1)} \cdot\lambda(\T_{\KB})) =\mathcal{O}( h \cdot (\ell m  k^r)^{(h+1)} \cdot 2^{\ell  m  k^r})$.

In consideration of the fact that the sizes of $\phi_\KB$ and $\psi_{Q}$ are polynomially related to those of $\KB$ and of $Q$, respectively (see the proof of Theorem 1 in \cite{RR2017ext} for details on the reduction), the HO-answer set of $Q$ with respect to $\KB$ can be computed in double-exponential time. If $\KB$ contains neither role chain axioms nor qualified cardinality restrictions, the maximum number of universal quantifiers in $\phi_{\KB}$, namely $r$, is a constant (in particular $r = 3$), and thus our \textit{HOCQA} problem can be solved in EXPTIME. %The latter complexity result shows that the \textit{HOCQA} problem is inherently more difficult than the consistency problem (proved to be NP-complete in \cite{ictcs16}). Indeed the latter problem simply requires to guess a model of the knowledge base while the former forces the construction of all the models of the knowledge base and to  build a decision tree for each of them.  
Such upper bound compares favourably to the complexity of the usual CQA problem for a wide collection of DLs such as the Horn fragment of $\mathcal{SHOIQ}$ and of $\mathcal{SROIQ}$ which are, respectively, EXPTIME- and 2EXPTIME-complete in combined complexity (see \cite{Ortiz:2011:QAH:2283516.2283571} for details). 

\section{Remarks on different versions of the algorithm}

The C++ implementation of the algorithm presented in this paper, called \kegs-system, is more efficient than the prototype (KE-system) introduced in \cite{cilc17}. The main motivation behind such a performance improvement relies on the introduction of the $\egamma$ (see Fig.~\ref{exprule}) that acts on the $\coreflqsr$-purely universal quantified formulae in the KB by systematically instantiating them and applying the standard E-rule (elimination rule) on-the-fly. The $\egamma$ replaces the preliminary phase of systematic expansion of the $\coreflqsr$-purely universal quantified formulae in the KB and the subsequent application of the E-rule implemented by the KE-system presented in \cite{RR2017}. The \kegs-system turns out be more efficient also than an implementation (FO KE-system) of the FO \ke\space in \cite{dagostino94} that applies the standard $\gamma$- and E-rules. Incidentally, it  turns out that the  KE-system and the FO KE-system have similar performances.

%We now introduce a comparison among the three versions of the reasoner implemented and we provide a performance benchmark of the three systems. Specifically, we presented in \cite{cilc17} an implementation of the reasoner based on the \ke\space system defined in \cite{RR2017} which introduces a preliminary phase of systematic expansion of each $\coreflqsr$ purely universally quantified formula and a subsequent phase where the \ke\space is expanded till saturation by systematically applying the E-Rule (elimination rule) and the PB-Rule (principle of bivalence rule) in \cite{RR2017} to formulae of type $\beta_1 \vee \ldots \vee \beta_n$. From the reasoner in \cite{cilc17}, we fork the project in two directions. In the first direction, we  modify  the reasoner  by encompassing the E-rule and the preliminary phase of systematic expansion of quantified formulae in \cite{RR2017} with the $\gamma$-rule of the FO \ke\space in \cite{dagostino94} which is systematically applied to each $\coreflqsr$ quantified formula. In the second direction, we modify the reasoner adapting it on the \keg\space introduced in this paper, namely by replacing the $\gamma$-rule with the $\egamma$ in Fig.~\ref{exprule} of this work. %We now give some details on the implementation of the three reasoners and we provide the results of the benchmarking.

All the three systems take as input  an OWL ontology also admitting SWRL rules and serialized in the OWL/XML syntax, satisfying the requirements of a $\shdlssx$-KB. Such ontologies are parsed in order to produce the internal coding of all axioms and assertions of the ontology in set-theoretic terms as a list of strings by exploiting the function $\theta$ used in \cite{RR2017} to map $\shdlssx$-KBs to $\coreflqsr$-formulae. Each  string represents either a $\coreflqsr$-quantifier-free literal or a $\coreflqsr$ purely universal quantified formula in CNF, whose quantifiers have been moved as inward as possible and renamed in such a way as to be pairwise distinct. Data-structures exploited by the three systems are implemented in a similar way. The interested reader is referred to \cite{cilc17} for details. We point out that $\coreflqsr$-quantified variables and $\coreflqsr$-free variables are collected into two separate vectors and stored in order of appearance in the KB. These vectors ensure that the individuals used for the expansion of the universally quantified formulae are selected in the same order for all the three systems. This fact  guarantees that the number of branches of the three systems is the same, a key-aspect in the evaluation of their performances. In fact, in a \ke-based system, the number of branches coincides with the number of distinct models that each system computes in order to saturate the KB. Since the number of distinct branches is the same and the PB-rule is the same in all the three systems, the difference of performance among them is only due to the expansion rules.

\begin{example}
Let  
\begin{multline*}
\phi_{\KB} \defAs \neg ( \langle x_{\text{\it Italy}}, x_{\text{\it Rome}} \rangle \in X^3_{\text{\it locatedIn}} ) \wedge (\forall z_1) ( \langle z_1, z_1 \rangle \in X^3_{\text{\it isPartOf}} ) \\ {} \wedge (\forall z_1)(\forall z_2)(  \neg ( \langle z_1, z_2 \rangle \in X^3_{\text{\it locatedIn}}) \vee \langle z_1, z_2 \rangle \in X^3_{\text{\it isPartOf}})
\end{multline*}
be a $\shdlssx$-KB and let $\psi_Q= \langle x_{Rome}, x_{Italy} \rangle \in X^3_{r}$ be a HO-$\shdlssx$ conjunctive query. Fig.~\ref{kegfig} shows a \ke\space and a \keg\space for the answer set of $\psi_{Q}$ w.r.t.\ $\phi_{KB}$. Since the FO \ke\space can be represented along the same lines as the \ke\space, we refrain from reporting it. 
\end{example}

%	\noindent
%\begin{tabular}{ll} \small
% Let  $\phi_{\KB} \defAs $ & $\neg ( \langle x_{\text{Italy}}, x_{\text{Rome}} \rangle \in X^3_{\text{locatedIn}} ) \wedge (\forall z_1) ( \langle z_1, z_1 \rangle \in X^3_{\text{isPartOf}} ) $ \\ & $ \wedge (\forall z_1)(\forall z_2)(  \neg ( \langle z_1, z_2 \rangle \in X^3_{\text{locatedIn}}) \vee \langle z_1, z_2 \rangle \in X^3_{\text{isPartOf}})$
%\end{tabular}

%\noindent be a $\shdlssx$-KB and let $\psi_Q= \langle x_{Rome}, x_{Italy} \rangle \in X^3_{r}$ be a HO-$\shdlssx$ conjunctive query. In Fig.~\ref{kegfig} we show a \ke\space and a \keg\space calculating the answer set of $\psi_{Q}$ w.r.t.\ $\phi_{KB}$. Since the FO \ke\space can be represented along the same lines of the \ke\space we refrain from reporting it. 

% \vspace{-0.7cm} 
%  \begin{figure}[H] 
%  	%\vspace*{0.5cm}
%  %	\underline{
%    \begin{framed}
%  	\centering \includegraphics[scale=0.48]{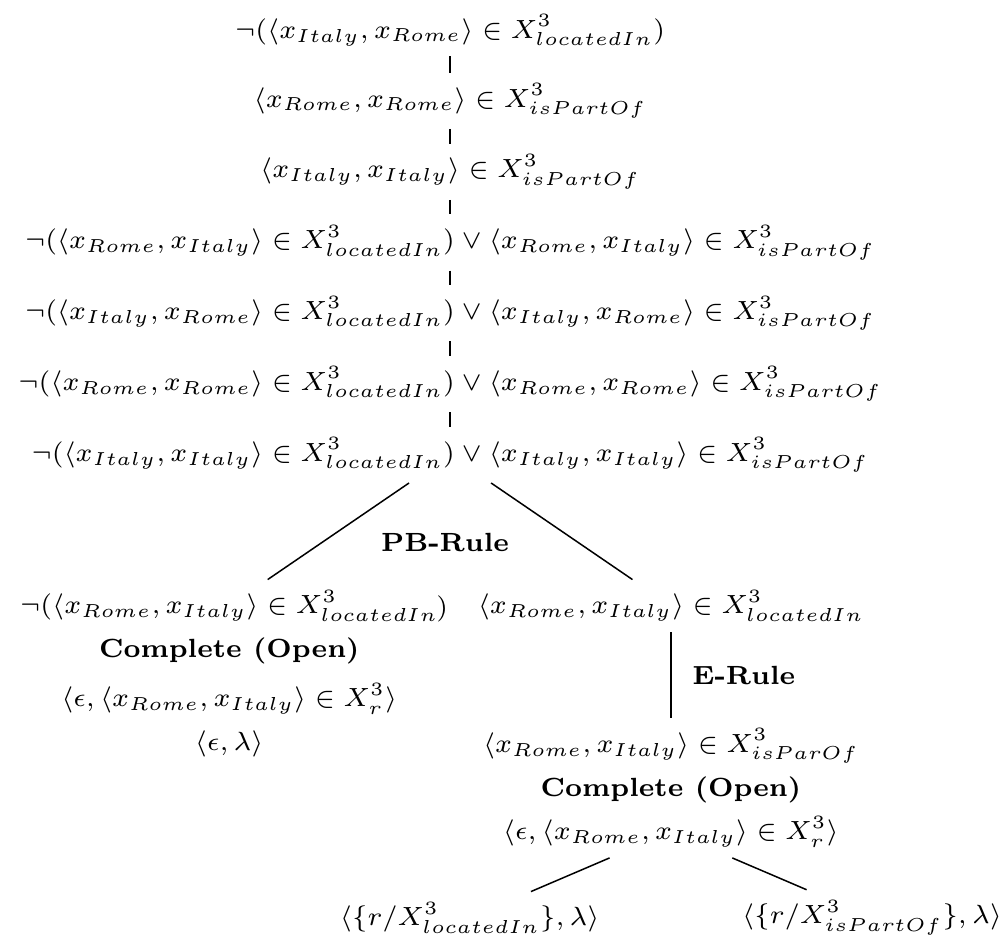} 
%  	\end{framed}\vspace{-0.75cm}
%  %	\\
%  	%\noindent\rule{8.5cm}{0.5pt}
%  %	\\[0.5cm]
%\end{figure}
\begin{figure}
    \begin{framed}
  	\centering \includegraphics[scale=0.48]{ketab.png} 
  	\end{framed}\vspace{-0.65cm}
  \begin{framed}
  	\centering \includegraphics[scale=0.48]{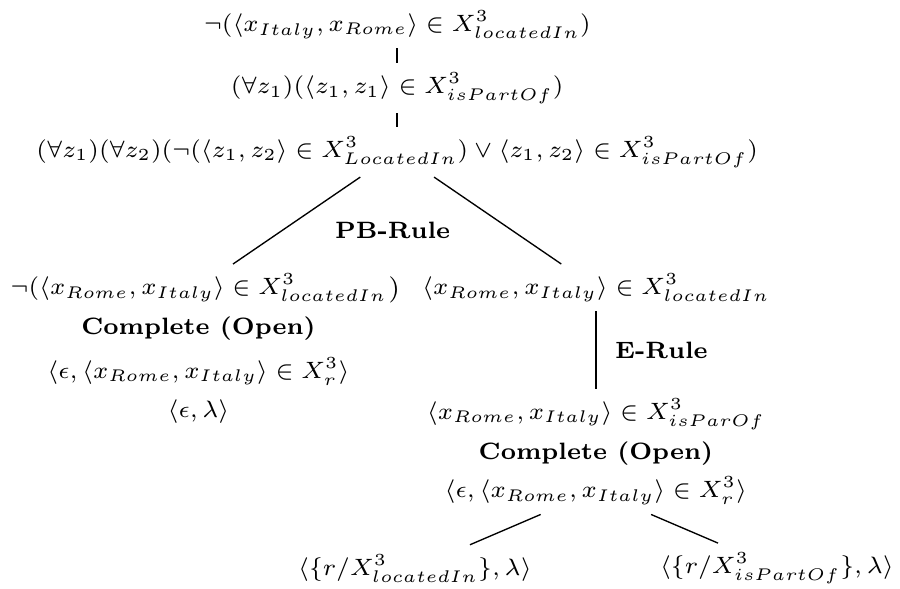}  
  \end{framed} 
  \vspace{-0.5cm}
  	\caption{\ke\space and \keg\space for the answer set of $\psi_Q$ w.r.t.\ $\phi_{\KB}$.}
  	\label{kegfig}
  \end{figure}

The metric used in the benchmarking is the number of models of the input KB computed by the reasoners and the time required to compute such models.

The $\shdlssx$-KBs considered in the tests have the following simple form:
\small
\begin{multline*}
\Phi_{\KB} \defAs \Big\{x_{a} \in X^1_D,~ x_{b} \in X^1_D,~ x_{c} \in X^1_D,~ x_{d} \in X^1_D, \\(\forall z)(\forall z_1)\big( (z \in X^1_{A} \wedge \langle z, z_1 \rangle  \in X^3_{P}\wedge z_1 \in X^1_{B} \wedge \langle z, z_1 \rangle \in X^3_{P_1}  ) \rightarrow z_1 \in X^1_{C}\big) \Big\}.
\end{multline*}
\normalsize
The KB $\Phi_{\KB}$ generates more than $10^6$ open branches which are computed in about $2$ seconds using the \kegs-system and in about $6$ seconds using the other systems. As shown in Fig.~\ref{bench}, the \kegs-system has a better performance than the other two up to about $400\%$, even if in some cases (lowest part of the plot) the performances of the three systems are comparable.  Thus the \kegs-system is always convenient, also because the collection of expansions of $\shdlssx$-purely universal quantified formulae of level 1  (exponential in the size of the KB) is not stored in memory.

\vspace*{-0.5cm}
\begin{figure}[H] 
	\centering \includegraphics[scale=0.6]{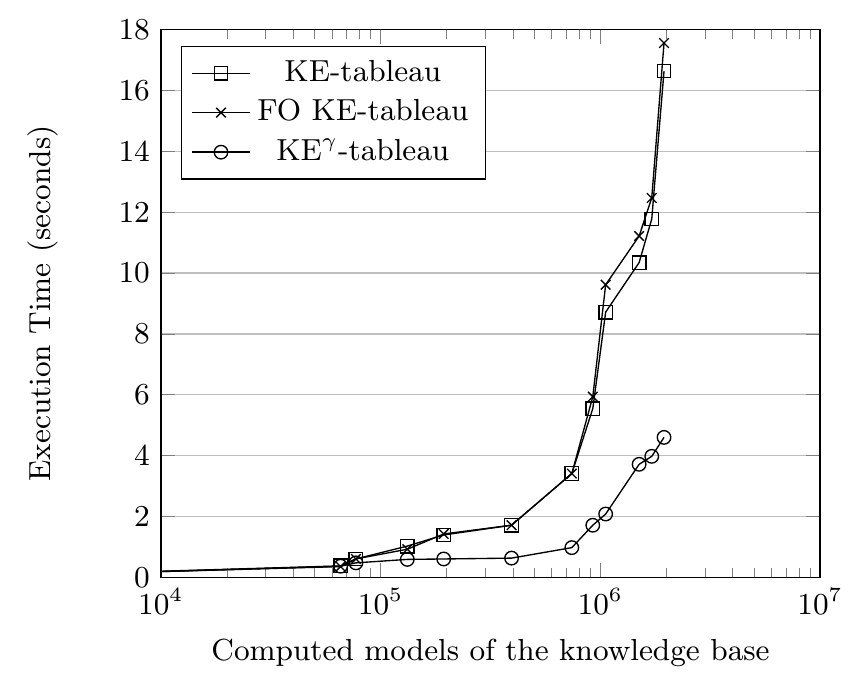}
	\caption{Comparison among the KE-system, FO KE-system, and \kegs-system.}	\label{bench}
\end{figure}

The benchmarking process is based on a huge amount of KBs of different size and kind, constructed ad-hoc for the purpose of comparing the three mentioned  systems, and on some real-world ontologies developed by the authors.

%\begin{figure}[H] 
%	\centering
%\begin{tikzpicture}
%\begin{axis}[
%ylabel={Execution Time (seconds)},
%xlabel={Computed models of the knowledge base},
%legend pos=north west,
%xmode=log,
%xmin=10e3, xmax=1e7,
%ymin=0, ymax=18,
%ytick={0, 2, 4, 6, 8, 10, 12, 14, 16, 18},
%%xtick={0, 1e1, 1e2, 1e3, 1e4, 1e5, 1e6, 1e7},
%ymajorgrids=true
%]
%
%\addplot[
%color=black,
%mark=square,
%]
%coordinates {
%(1558, 0.028) (65736, 0.376) (77182, 0.605) (132233, 1.028) (193673, 1.393) (394281, 1.714)	(741452, 3.412) (923521, 5.546) (1056785,8.718 ) (1500625, 10.345) (1712941, 11.779 )(1948324, 16.640) 
%};
%
%%Classic D'agostino gamma rule
%\addplot[
%color=black,
%mark=x,
%]
%coordinates {
%	(1558, 0.026) (65736, 0.354) (77182, 0.606) (132233, 0.924) (193673, 1.426) (394281, 1.714)	(741452, 3.412) (923521, 5.937) (1056785, 9.618 ) (1500625, 11.217) (1712941, 12.468 )(1948324, 17.562) 
%};
%
%\addplot[
%color=black,
%mark=o,
%]
%coordinates {
%  (1558, 0.017) (65736, 0.367) (77182, 0.476) (132233, 0.590) (193673, 0.604) (394281, 0.632) (741452, 0.977) (923521, 1.713)	( 1056785, 2.081) (1500625, 3.715) (1712941, 3.977) ( 1948324, 4.600 )
%};
%\legend{\ke, FO \ke, \keg}
%\end{axis}
%\end{tikzpicture}
%%
%
%\caption{Comparison between \ke\space and \keg\space systems.}	\label{bench}
%\end{figure}

\section{Conclusions and future work}

We presented an improvement, called \keg, of the \ke\ in \cite{RR2017} for the most widespread reasoning tasks for $\shdlssx$-TBoxes and $\shdlssx$--ABoxes. 
%such as consistency checking of $\shdlssx$-knowledge bases and a generalization of the CQA problem for $\shdlssx$ admitting conjunctive queries with variables of three sorts, called HOCQA problem. 
These reasoning problems are addressed by translating $\shdlssx$-KBs and queries in terms of formulae of the set-theoretic language $\coreflqsr$. The procedure introduced in this paper generalizes the KE-elimination rule in such way as to incorporate the $\gamma$-rule, that is the expansion rule handling  universally quantified formulae. The \keg\space procedure has remarkable aftermath, since its implementation is markedly more efficient in terms of space and execution time than the KE-system \cite{cilc17} and the implementation (FO KE-system) of the FO \ke \cite{dagostino94}, as observed in our experimental tests.  

We plan to modify the set-theoretic fragment underpinning the reasoner so as to include a restricted version of the operator of relational composition in order to be able to reason with DLs admitting full existential and universal quantification. Results and notions presented in \cite{DBLP:conf/cade/CristiaR17} will be of inspiration for such a task. We also intend to improve our reasoner so as to deal with the reasoning problem of ontology classification. We shall compare the resulting reasoner with existing well-known reasoners such as Hermit \cite{ghmsw14HermiT} and Pellet \cite{PelletSirinPGKK07}, providing also some benchmarking. In addition, we plan to allow data type reasoning by either integrating existing solvers for the Satisfiability Modulo Theories (SMT) problem or by designing \emph{ad hoc} new solvers. 
Finally, as each branch of a \keg\space can be independently computed by a single processing unit, we plan to implement a  parallel version of the software by using the Nvidia CUDA framework.

\bibliographystyle{plain}
\bibliography{biblioext} 
\end{document}